\newtheorem{theorem}{Theorem}
\theoremstyle{definition}
\theoremstyle{lemma}
\newtheorem{lemma}{Lemma}
\def\ps@pprintTitle{%
  \let\@oddhead\@empty
  \let\@evenhead\@empty
  \let\@oddfoot\@empty
  \let\@evenfoot\@oddfoot
}
\begin{document}
	
\begin{frontmatter}
		
\title{Epidemic Transmission Modeling with Fractional Derivatives and Environmental Pathogens}

\author[Add:b]{Moein Khalighi}%[orcid=0000-0001-8176-0367]
\ead{moein.khalighi@utu.fi}

\author[Add:a]{Fa\"{\i}\c{c}al Nda\"{\i}rou}%[orcid=0000-0002-0119-6178] 
\ead{faical@math.bas.bg}

\author[Add:b]{Leo Lahti \corref{corD}}%[orcid=0000-0001-5537-637X]
\ead{leo.lahti@utu.fi}
\cortext[corD]{Corresponding author.}

\address[Add:a]{Institute of Mathematics and Informatics, Bulgarian Academy of Sciences, Bulgaria, Sofia 1113, ul. Akad. G. Bonchev, bl. 8}

\address[Add:b]{Department of Computing, Faculty of Technology, University of Turku}

% ----------------------------------

\begin{abstract}
This research presents an advanced fractional-order compartmental model designed to delve into the complexities of COVID-19 transmission dynamics, specifically accounting for the influence of environmental pathogens on disease spread. By enhancing the classical compartmental framework, our model distinctively incorporates the effects of order derivatives and environmental shedding mechanisms on the basic reproduction numbers, thus offering a holistic perspective on transmission dynamics. Leveraging fractional calculus, the model adeptly captures the memory effect associated with disease spread, providing an authentic depiction of the virus's real-world propagation patterns. A thorough mathematical analysis confirming the existence, uniqueness, and stability of the model's solutions emphasizes its robustness. Furthermore, the numerical simulations, meticulously calibrated with real COVID-19 case data, affirm the model's capacity to emulate observed transmission trends, demonstrating the pivotal role of environmental transmission vectors in shaping public health strategies. The study highlights the critical role of environmental sanitation and targeted interventions in controlling the pandemic's spread, suggesting new insights for research and policy-making in infectious disease management.
\end{abstract}

\begin{keyword}
mathematical modeling of COVID-19 pandemics, pathogens, environmental effect 
\sep fractional differential equations
\sep numerical simulations.

\medskip

\MSC[2010]{26A33 \sep 34A08 \sep 92D30.}

\end{keyword}

\end{frontmatter}

% -------------------------------------------

\section{Introduction}
A comprehensive understanding of the disease transmission dynamics is essential to control the spread of epidemic diseases and prevent future viral outbreaks. Many advanced mathematical models have been developed to describe the spread of the virus. In this context, various incidence functions exist to model virus mutations during infection when infectious individuals' dynamics change. Examples include the bilinear incidence or law of mass action described in \cite{fays3} and \cite{boyer}, the saturated incidence function in \cite{MR3739726} and \cite{MR2927215}, and the Beddington--DeAngelis function in \cite{nilam} and \cite{MR3855214}. Other specific nonlinear incidence functions are also used in \cite{boukma} and \cite{lotfi}, and recent developments in this field suggest utilizing Stieltjes derivatives to model time-varying contact rates and other parameters~\cite{area}.

The existing models for studying disease transmission have faced limitations due to their simplistic assumptions of uniform interactions among individuals. Such models overlook the intricate dynamics involved in disease spread, including the transmission via contaminated environments and the movement of live viruses. To address these shortcomings, recent research has proposed new modeling approaches that include environmental pathogens~\cite{mwalili, singh2023covid, garba2020modeling}, showing promise in capturing more realistic disease dynamics. However, these advanced models often struggle to fit real-world data accurately, partly because they omit critical factors such as mortality due to the virus~\cite{mwalili}. Furthermore, the application of fractional calculus, an area with significant potential for improving model accuracy, has not been thoroughly explored in this context.

Fractional calculus represents a novel method that enhances disease transmission models by integrating memory effects and long-range dependencies~\cite{fays1, fays2, FracSIR2017, NDAIROU2023113361}. This approach, by using incommensurate fractional derivatives, offers enhanced flexibility, enabling the models to more accurately reflect the complex and diverse nature of disease spread~\cite{ndairou2023ebola, JAHANSHAHI}. This complexity encompasses varying demographics, social behaviors, and intervention strategies that influence transmission rates. Despite the application of fractional order derivatives in modeling disease dynamics with considerations for environmental factors and social distancing measures, challenges remain regarding the mathematical precision and reliability of critical indicators, such as the basic reproduction number, $R_0$~\cite{fays1}.

In our research, we tackle these challenges by ensuring precise computation of $R_0$ and affirming the local and global stability of the disease-free equilibrium. We incorporate incommensurate fractional order derivatives into ordinary differential equations (ODEs) to enhance the depiction of memory effects. This improvement aims to develop a more flexible model for assessing how past infections influence future disease transmission. Our model also includes deceased compartments. This work delves into the existence, uniqueness, and boundedness of the system's solutions within the fractional framework. Notably, in our fractional model, the value of $R_0$ is also influenced by the derivative order, providing a more detailed and accurate understanding of disease transmission dynamics. Furthermore, we conduct a comparative analysis of $R_0$ in scenarios with and without environmental pathogens, considering both integer and fractional order models, to underline the critical role of environmental transmission pathways.

The article is organized as follows: In Section \ref{sec:model}, we introduce the structure of our model. It features Caputo fractional derivatives and delineates various population classes—susceptible, exposed, symptomatic, and asymptomatic infectious individuals, those in recovery, fatalities, and a compartment for environmental pathogens, each with its respective coefficients. Section \ref{Sec:mathAnalis} delves into the mathematical analysis of the model. Here, we explore the existence and uniqueness of solutions, their boundedness, the derivation of the basic reproduction number, and an examination of the disease-free equilibrium to assess both local and global stability. Section \ref{sec:numerical} presents the numerical findings, including the fitting of the model to real data from South Africa. Additionally, sensitivity analysis is performed to evaluate the effect of various model parameters and order of derivatives on the basic reproduction number, which offers valuable insights for controlling the spread of COVID-19.
%----------------------------------

\section{The fractional-order model}
\label{sec:model}

This section presents a modified fractional compartmental model for COVID-19, building upon the framework originally proposed by \cite{mwalili}. The following system of equations describes the model:

\begin{equation}
\label{model}
\begin{cases}
\displaystyle{ ^{C}D^{\alpha_S}_{0+}S(t) 
= \Lambda^{\alpha_{S}} -\frac{\beta_1^{\alpha_S} S(t)W(t)}{1+\phi_1 W(t)}-\frac{\beta_2^{\alpha_S}S(t)\left(I_A(t)+ I_S(t)\right)}{1+\phi_2\left(I_A(t)+I_S(t)\right)} + \psi^{\alpha_S} E(t) -\mu^{\alpha_S} S(t)},\\[3mm]
\displaystyle{ ^{C}D^{\alpha_E}_{0+}E(t)
= \frac{\beta_1^{\alpha_E} S(t)W(t)}{1+\phi_1 W(t)}+\frac{\beta_2^{\alpha_E}S(t)\left(I_A(t)+ I_S(t)\right)}{1+\phi_2\left(I_A(t)+I_S(t)\right)}-\psi^{\alpha_E} E(t) -\mu^{\alpha_E} E(t) -\omega^{\alpha_E} E(t)}, \\[3mm]
\displaystyle{ ^{C}D^{\alpha_{I_A}}_{0+}I_{A}(t)
= (1-\delta)\omega^{\alpha_{I_A}} E(t) - (\mu^{\alpha_{I_A}} + \sigma^{\alpha_{I_A}})I_A(t) - \gamma_A^{\alpha_{I_A}} I_A(t)}, \\[3mm]
\displaystyle{ ^{C}D^{\alpha_{I_S}}_{0+}I_{S}(t)
= \delta \omega^{\alpha_{I_S}} E(t) - (\mu^{\alpha_{I_S}} + \sigma^{\alpha_{I_S}})I_S(t) -\gamma_S^{\alpha_{I_S}} I_S(t)}, \\[3mm]
\displaystyle{^{C}D^{\alpha_R}_{0+}R(t)
= \gamma_S^{\alpha_{R}} I_S(t) + \gamma_A^{\alpha_{R}} I_A(t) -\mu^{\alpha_{R}} R(t) },\\[3mm]
\displaystyle{^{C}D^{\alpha_D}_{0+}D(t)
= \sigma^{\alpha_{D}} (I_S(t) + I_A(t)) -\mu^{\alpha_{D}} D(t) },\\[3mm]
\displaystyle{^{C}D^{\alpha_W}_{0+}W(t)
= \eta_A^{\alpha_{W}} I_A(t) + \eta_S^{\alpha_{W}} I_S(t)-\mu_P^{\alpha_{W}}W(t)}.
\end{cases}
\end{equation}
The model delineates the dynamics of COVID-19 across seven distinct compartments, representing the stages of the disease from susceptibility to recovery or death, alongside the environmental presence of pathogens. These compartments are denoted as follows: Susceptible individuals ($S$), Exposed individuals ($E$), Asymptomatic infectious individuals ($I_A$), Symptomatic infectious individuals ($I_S$), Recovered individuals ($R$), Deceased individuals ($D$), and Environmental pathogens ($W$). Detailed parameter descriptions are provided in Table \ref{tab:param}.

\begin{table}[ht!]
\centering
\caption{Description of the parameters and their units.} 
\begin{tabular}{|l|l|l|} \hline
\textbf{Parameter} & \textbf{Description} & \textbf{Unit} \\ \hline
$\Lambda$ & Rate of new susceptible births per day & $1/\text{days}$ \\
$\mu$ & Daily per capita death rate, excluding disease deaths & $1/\text{days}$ \\
$\mu_P$ & Daily death rate of pathogens in the environment & $1/\text{days}$ \\
$\phi_1$ & Fraction interacting with infectious environments daily & - \\
$\phi_2$ & Fraction interacting with infectious individuals daily & - \\
$\beta_1$ & Daily infection rate, $S$ to $E$, via environmental contact & $1/\text{days}$ \\
$\beta_2$ & Daily infection rate, $S$ to $E$, via contact with $I_A$ or $I_S$ & $1/\text{days}$ \\
$\delta$ & Fraction of symptomatic infectious people & - \\
$\psi$ & Daily rate of $E$ to $S$ due to immune response & $1/\text{days}$ \\
$\omega$ & Daily progression rate from $E$ to $I_A$ or $I_S$ & $1/\text{days}$ \\
$\sigma$ & Daily disease mortality rate & $1/\text{days}$ \\
$\gamma_S$ & Daily recovery rate of symptomatic individuals & $1/\text{days}$ \\
$\gamma_A$ & Daily recovery rate of asymptomatic individuals & $1/\text{days}$ \\
$\eta_S$ & Daily virus spread rate to environment by $I_S$ & $1/\text{days}$ \\
$\eta_A$ & Daily virus spread rate to environment by $I_A$ & $1/\text{days}$ \\
\hline
\end{tabular}
\label{tab:param}
\end{table}

In our model, Caputo fractional-order derivatives are employed to capture the complex dynamics of the disease's spread, with a formal definition provided in the subsequent section. It is important to note that all model parameters, except for $\phi_1$, $\phi_2$, and $\delta$, possess dimensions of $1/t^{\alpha_i}$ (where $i$ represents the different compartments: $S$, $E$, $I_A$, $I_S$, $R$, $D$, $W$). To ensure dimensional consistency, as stressed by \cite{diethelm2013fractional}, these parameters are exponentiated to their respective derivative orders. This approach not only adheres to the mathematical rigor required for the model's integrity but also enhances its applicability in capturing the behaviors of disease transmission and progression.

The infection transmission mechanisms, represented by $\displaystyle{ \frac{\beta_1^{\alpha_S} SW}{1+\phi_1 W}, \; \frac{\beta_2^{\alpha_S} S(I_A+ I_S)}{1+\phi_2(I_A+I_S)}}$, integrate the concept of saturation in incidence functions, reflecting a reduction in infection rates as a result of social distancing and other behavioral adjustments (\cite{beay, MR3212235,MR2646592,MR2340347}). This approach, validated by \cite{mwalili} and further explored by \cite{razzaq}, expresses the interplay between population behavior and the spread of the disease. In our proposed model, we have also incorporated the dynamics of deceased individuals, $D$, an aspect not covered in the original model~\cite{mwalili}. This enhancement allows for a more plausible fit with the available data.

% ------------------------------
\section{Mathematical analysis of the model}\label{Sec:mathAnalis}

\subsection{Preliminaries}
This section covers key concepts from fractional calculus, including the Riemann-Liouville fractional integral, Caputo fractional derivative operator, and Mittag-Leffler function~\cite{podlubny1998fractional, kilbas2006theory}.

We denote the time-fractional Caputo derivative of order \(\alpha \in \mathbb{R}^+\), starting from the initial time \(t=0\), as \({}^C{D}_{0+}^{\alpha}\). For a continuously differentiable function \(f\), the Caputo fractional derivative at time \(t\) is given by:
\begin{equation}\label{eq:caputo}
{}^C{D}_{0+}^{\alpha} f(t) = I^{m-\alpha}_{0+}f^{(m)}(t)=\frac{1}{\Gamma (m- \alpha  )}\int_{0}^{t}{\frac{f^{(m)}(\tau )d\tau }{{(t - {\tau})}^{1+\alpha-m}}},
\end{equation}
    where \( f^{(m)}(t) \) signifies the \( m \)-th order derivative of \( f(t) \), with \( m \) being the smallest integer not less than \(\alpha\).  For the scope of this study, considering \(0 < \alpha \leq 1\), it follows that \(m=1\). The operator \(I^{1-\alpha}_{0+}\) represents the Riemann-Liouville fractional integral of order \(1-\alpha\), defined as
\begin{equation}\label{eq:RLInt}
I^{\alpha}_{0+}f(t)=\frac{1}{\Gamma (\alpha )}\int_{0}^{t}{\frac{f(\tau )d\tau }{{(t - {\tau})}^{1-\alpha}}},
\end{equation}
with \(\Gamma\) denoting the gamma function. 

The Mittag-Leffler function plays a crucial role in our discussion. This function is useful for its properties in fractional calculus and its applications. In this study, we use an asymptotic expansion of this function defined as follows:

\begin{equation*}
E_{\alpha, \beta}(z) \approx - \sum_{k=1}^{N}\frac{z^{-k}}{\Gamma(\beta-\alpha k)} + O\left(\frac{1}{|z|^{1+N}}\right),
\end{equation*}

for large values of \(|z|\) (specifically, as \(|z|\) approaches infinity), and when the argument of \(z\) (\(|\text{arg}(z)|\)) is between \(\frac{\alpha \pi}{2}\) and \(\pi\). The notation \(O\left(\frac{1}{|z|^{1+N}}\right)\) indicates the asymptotic behavior of the function as \(|z|\) becomes very large, providing insight into how the function decays over large distances or times~\cite{sulami}. 

\subsection{Existence and Uniqueness}
This section delves into the theoretical foundation concerning the existence and uniqueness of solutions for the system outlined in Equation~\ref{model}, leveraging the principles of fixed-point theory and the Picard-Lindelöf theorem~\cite{math11030555}. We reformulate system~\ref{model} as follows:
\begin{equation}
\label{short-model}
\begin{cases}
 ^{C}D^{\alpha_S}_{0+}S(t) 
= F_1(t,S),\\
^{C}D^{\alpha_E}_{0+}E(t)
= F_2(t,E),\\
 ^{C}D^{\alpha_{I_A}}_{0+}I_{A}(t)
= F_3(t,I_A), \\
 ^{C}D^{\alpha_{I_S}}_{0+}I_{S}(t)
= F_4(t,I_S), \\
^{C}D^{\alpha_R}_{0+}R(t)
= F_5(t,R),\\
^{C}D^{\alpha_D}_{0+}D(t)
= F_6(t,D),\\
^{C}D^{\alpha_W}_{0+}W(t)
= F_7(t,W),
\end{cases}
\end{equation}
where the function $F_i$, $1\leq i \leq 7$ , are the derivative functions on right side of system~\ref{model}.

Thus, our approach unfolds in two steps. Initially, we establish positive values for the initial conditions. Subsequently, we apply the fractional order integral operator, which reconstitutes the system's equations as follows:

\begin{equation}
\begin{cases}
S(t)-S(0)
= I^{\alpha_S}_{0+}\left(\Lambda^{\alpha_{S}} -\frac{\beta_1^{\alpha_S} S(t)W(t)}{1+\phi_1 W(t)}-\frac{\beta_2^{\alpha_S}S(t)\left(I_A(t)+ I_S(t)\right)}{1+\phi_2\left(I_A(t)+I_S(t)\right)} + \psi^{\alpha_S} E(t) -\mu^{\alpha_S} S(t)\right),\\
E(t)-E(0)
= I^{\alpha_E}_{0+} \left(\frac{\beta_1^{\alpha_E} S(t)W(t)}{1+\phi_1 W(t)}+\frac{\beta_2^{\alpha_E}S(t)\left(I_A(t)+ I_S(t)\right)}{1+\phi_2\left(I_A(t)+I_S(t)\right)}-\psi^{\alpha_E} E(t) -\mu^{\alpha_E} E(t) -\omega^{\alpha_E} E(t)\right), \\
I_A(t)-I_A(0)
= I^{\alpha_{I_A}}_{0+} \left( (1-\delta)\omega^{\alpha_{I_A}} E(t) - (\mu^{\alpha_{I_A}} + \sigma^{\alpha_{I_A}})I_A(t) - \gamma_A^{\alpha_{I_A}} I_A(t) \right),\\
 I_S(t)-I_S(0)= I^{\alpha_{I_S}}_{0+} \left(\delta \omega^{\alpha_{I_S}} E(t) - (\mu^{\alpha_{I_S}} + \sigma^{\alpha_{I_S}})I_S(t) -\gamma_S^{\alpha_{I_S}} I_S(t)\right),\\
R(t)-R(0)= I^{\alpha_R}_{0+} \left(\gamma_S^{\alpha_{R}} I_S(t) + \gamma_A^{\alpha_{R}} I_A(t) -\mu^{\alpha_{R}} R(t)\right),\\
D(t)-D(0) =I^{\alpha_E}_{0+} \left(\sigma^{\alpha_{D}} (I_S(t) + I_A(t)) -\mu^{\alpha_{D}} D(t)\right),\\
W(t)-W(0)= I^{\alpha_W}_{0+} \left(\eta_A^{\alpha_{W}} I_A(t) + \eta_S^{\alpha_{W}} I_S(t)-\mu_P^{\alpha_{W}}W(t)\right).
\end{cases}
\end{equation}

Utilizing Equation~\ref{short-model}, we derive the state variables expressed in terms of $F_i$ as follows:

\begin{equation}\label{eq11}
\begin{cases}
S(t)=S(0) + \frac{1}{\Gamma(\alpha_S)}\int_0^t {(t-\tau)^{\alpha_S-1}F_1(\tau,S(\tau))d\tau},\\
E(t)=E(0) + \frac{1}{\Gamma(\alpha_E)}\int_0^t {(t-\tau)^{\alpha_E-1}F_2(\tau,E(\tau))d\tau},\\
I_A(t)=I_A(0) + \frac{1}{\Gamma(\alpha_{I_A})}\int_0^t {(t-\tau)^{\alpha_{I_A}-1}F_3(\tau,I_A(\tau))d\tau},\\
I_S(t)=I_S(0) + \frac{1}{\Gamma(\alpha_{I_S})}\int_0^t {(t-\tau)^{\alpha_{I_S}-1}F_4(\tau,I_S(\tau))d\tau},\\
R(t)=R(0) + \frac{1}{\Gamma(\alpha_R)}\int_0^t {(t-\tau)^{\alpha_R-1}F_5(\tau,R(\tau))d\tau},\\
D(t)=D(0) + \frac{1}{\Gamma(\alpha_D)}\int_0^t {(t-\tau)^{\alpha_D-1}F_6(\tau,D(\tau))d\tau},\\
W(t)=W(0) + \frac{1}{\Gamma(\alpha_W)}\int_0^t {(t-\tau)^{\alpha_W-1}F_7(\tau,W(\tau))d\tau}.\\
\end{cases}
\end{equation}

Therefore, by applying the Picard iteration to Equation~\eqref{eq11}, we obtain the subsequent equations:

\begin{equation}
\begin{cases}
S_{n+1}(t)=S(0) + \frac{1}{\Gamma(\alpha_S)}\int_0^t {(t-\tau)^{\alpha_S-1}F_1(\tau,S(\tau))d\tau},\\
E_{n+1}(t)=E(0) + \frac{1}{\Gamma(\alpha_E)}\int_0^t {(t-\tau)^{\alpha_E-1}F_2(\tau,E(\tau))d\tau},\\
I_{A_{n+1}}(t)=I_A(0) + \frac{1}{\Gamma(\alpha_{I_A})}\int_0^t {(t-\tau)^{\alpha_{I_A}-1}F_3(\tau,I_A(\tau))d\tau},\\
I_{S_{n+1}}(t)=I_S(0) + \frac{1}{\Gamma(\alpha_{I_S})}\int_0^t {(t-\tau)^{\alpha_{I_S}-1}F_4(\tau,I_S(\tau))d\tau},\\
R_{n+1}(t)=R(0) + \frac{1}{\Gamma(\alpha_R)}\int_0^t {(t-\tau)^{\alpha_R-1}F_5(\tau,R(\tau))d\tau},\\
D_{n+1}(t)=D(0) + \frac{1}{\Gamma(\alpha_D)}\int_0^t {(t-\tau)^{\alpha_D-1}F_6(\tau,D(\tau))d\tau},\\
W_{n+1}(t)=W(0) + \frac{1}{\Gamma(\alpha_W)}\int_0^t {(t-\tau)^{\alpha_W-1}F_7(\tau,W(\tau))d\tau}.\\
\end{cases}
\end{equation}

To introduce the result, define $\textbf{X}(t) = (S(t), E(t), I_A(t), I_S(t), R(t), D(t), W(t))^{\intercal}$ as the state vector and 

\noindent $\textbf{F}(t, \textbf{X}(t)) = (F_1(t, S(t)), F_2(t, E(t)), F_3(t, I_A(t)), F_4(t, I_S(t)), F_5(t, R(t)), F_6(t, D(t)), F_7(t, W(t)))^{\intercal}$ as the vector of functions describing the system's dynamics. Furthermore, let $\mathbb{R}^7_+ = \{\textbf{X} \in \mathbb{R}^7 : \textbf{X} \geq 0\}$ representing the space of feasible states.

\begin{lemma}\label{lemm: mean}
    Let $f(t)$ be a function such that $f(t) \in C([0,T])$ and its Caputo fractional derivative $^{C}D^{\alpha}_{0+}f(t) \in C((0,T])$, with $0 < \alpha \leq 1$. Then, the following relation holds:
    \begin{equation}
        f(t) = f(0) + \frac{1}{\Gamma(\alpha)} \, ^{C}D^{\alpha}_{0+}f(\xi) \, (t - 0)^{\alpha},
    \end{equation}
    where $0 \leq \xi \leq t$ for all $t \in (0,T]$. This relation represents the fractional generalization of the Mean Value Theorem, as established in~\cite{ODIBAT2007286}.
\end{lemma}

\begin{lemma}
Assuming the existence of a solution for system \eqref{model}, said solution will maintain non-negativity provided that the initial conditions are also non-negative.
\end{lemma}

\begin{proof}
    To establish the result, we initially note the validity of the following conditions:
    \begin{equation}
        \begin{cases}
 ^{C}D^{\alpha_S}_{0+}S(t) |_{S=0}= \Lambda^{\alpha_S} + \psi^{\alpha_S} E(t)\geq 0,\\
^{C}D^{\alpha_E}_{0+}E(t)|_{E=0}
= \frac{\beta_1^{\alpha_E} S(t) W(t)}{1+\phi_1 W(t)}+\frac{\beta_2^{\alpha_E} S(t) (I_A(t) +I_S(t))}{1+\phi_2(t)(I_A(t) +I_S(t))}\geq 0,\\
 ^{C}D^{\alpha_{I_A}}_{0+}I_{A}(t) |_{I_A=0}
= (1-\delta)\omega^{\alpha_{I_A}} E(t) \geq 0, \\
 ^{C}D^{\alpha_{I_S}}_{0+}I_{S}(t) |_{I_S=0}
= \delta \omega^{\alpha_{I_S}} E(t) \geq 0, \\
^{C}D^{\alpha_R}_{0+}R(t) |_{R=0}
= \gamma_S^{\alpha_R} I_S(t) + \gamma_A^{\alpha_R} I_A(t) \geq 0,\\
^{C}D^{\alpha_D}_{0+}D(t) |_{D=0}
= \sigma^{\alpha_D}(I_S(t)+I_A(t)) \geq 0,\\
^{C}D^{\alpha_W}_{0+}W(t) |_{W=0}
= \eta_A^{\alpha_W} I_A(t) +\eta_S^{\alpha_W} I_S(t) \geq 0,
\end{cases}
    \end{equation}
for all $t \in [0, T]$. Given these conditions and referencing Lemma \ref{lemm: mean}, it follows that the solution vector $\textbf{X}(t) = (S(t), E(t), I_A(t), I_S(t), R(t), D(t), W(t))^{\intercal}$ of system \eqref{model} is contained within $\mathbb{R}^7_+$. This observation conclusively verifies the non-negativity of the solution, thereby completing the proof.
\end{proof}

Now, we endeavor to present and validate a pivotal outcome of our study. This outcome lays the foundational groundwork for establishing the existence and uniqueness of solutions for the system outlined in Equation \eqref{model}, details of which will be elaborated upon subsequently.

\begin{lemma}\label{lemm: lips}
    The function $\textbf{F}(t,\textbf{X}(t))$ fulfills the Lipschitz conditions, specifically:
    \begin{equation}
        \|\textbf{F}(t, \textbf{X}(t)) - \textbf{F}(t, \textbf{X}^*(t))\| \leq \Sigma \|\textbf{X}(t)-\textbf{X}^*(t)\|,
    \end{equation}
    where
    \begin{equation}
    \begin{split}
        \Sigma=max\{|\frac{\beta_1^{\alpha_S}}{\phi_1}+\frac{\beta_2^{\alpha_S}}{\phi_2}+\mu^{\alpha_S}|, | \psi^{\alpha_E}+\mu^{\alpha_E}+\omega^{\alpha_E}|, |\mu^{\alpha_{I_A}}+\sigma^{\alpha_{I_A}}+\gamma_A^{\alpha_{I_A}}|, & \\ |\mu^{\alpha_{I_S}}+\sigma^{\alpha_{I_S}}+\gamma_S^{\alpha_{I_S}}|,|\mu^{\alpha_R}|,|\mu^{\alpha_D}|,|\mu_p^{\alpha_W}|\}.
    \end{split}
    \end{equation}
\end{lemma}
\begin{proof}
    Summarising that $S(t)$ and $S^*(t)$ are couple functions yields the following equality:
\begin{equation}
        \|F_1(t,S(t)) - F_1(t, S^*(t))\|=\Big\Vert\ \left( \frac{\beta_1^{\alpha_S}W(t)}{1+\phi_1 W(t)}+\frac{\beta_2^{\alpha_S}(I_A(t)+I_S(t))}{1+\phi_2(I_A(t)+I_S(t))}+\mu^{\alpha_S}\right)(S(t)-S^*(t))\Big\Vert\ .
\end{equation}
    By defining
\begin{equation}
        \Sigma_1=|\frac{\beta_1^{\alpha_S}}{\phi_1}+\frac{\beta_2^{\alpha_S}}{\phi_2}+\mu^{\alpha_S}|,
\end{equation}
we can infer the inequality:
\begin{equation}\label{eq18}
         \|F_1(t,S(t)) - F_1(t, S^*(t))\|\leq\Sigma_1\|(S(t)-S^*(t)\|.
    \end{equation}
Proceeding similarly for the other functions yields:    \begin{align}\label{eq19}
        \|F_2(t,E(t)) - F_2(t, E^*(t))\| &\leq\Sigma_2\|(E(t)-E^*(t)\| \notag \\
        \|F_3(t,I_A(t)) - F_3(t, {I_A}^*(t))\| &\leq\Sigma_3\|(I_A(t)-{I_A}^*(t)\| \notag \\
\|F_4(t,I_S(t)) - F_4(t, {I_S}^*(t))\| &\leq\Sigma_4\|(I_S(t)-{I_S}^*(t)\|  \\
        \|F_5(t,R(t)) - F_5(t, R^*(t))\| &\leq\Sigma_5\|(R(t)-R^*(t)\| \notag \\
        \|F_6(t,D(t)) - F_6(t, D^*(t))\| &\leq\Sigma_6\|(D(t)-D^*(t)\| \notag \\
        \|F_7(t,W(t)) - F_7(t, W^*(t))\| &\leq\Sigma_7\|(W(t)-W^*(t)\| \notag ,
    \end{align}
        with the $\Sigma$ values specified as: 
        \begin{align}\label{eq20}
        \Sigma_2 &= | \psi^{\alpha_E}+\mu^{\alpha_E}+\omega^{\alpha_E}| \notag ,\\
        \Sigma_3 &= |\mu^{\alpha_{I_A}}+\sigma^{\alpha_{I_A}}+\gamma_A^{\alpha_{I_A}}|\notag ,\\
        \Sigma_4 &= |\mu^{\alpha_{I_S}}+\sigma^{\alpha_{I_S}}+\gamma_S^{\alpha_{I_S}}|  ,\\
        \Sigma_5 &= |\mu^{\alpha_R}| \notag ,\\
        \Sigma_6 &= |\mu^{\alpha_D}| \notag ,\\
        \Sigma_7 &= |\mu_p^{\alpha_W}| \notag .
        \end{align}

This analysis, from Equations \eqref{eq18} to \eqref{eq20}, demonstrates that all seven functions, $F_i$, meet the Lipschitz condition, validating their properties for the system \eqref{model}.
\end{proof}

\begin{theorem}\label{theo: 1}
    Given the conditions of Lemma \ref{lemm: lips}, if the inequality    
    \begin{equation}
        \Sigma \max_{i} \frac{T^{\alpha_i}}{\Gamma(\alpha_i +1)}<1,\quad i={S,E,I_A,I_S,R,D,W}, 
    \end{equation}
    is satisfied, then the system \eqref{model} admits a unique, positive solution.
    \end{theorem}

\begin{proof}
    The solution to system \eqref{model} can be expressed in the form:
\begin{equation}
    \textbf{X}(t) = P(\textbf{X}(t)),
\end{equation}
where $P: C([0, T], \mathbb{R}^7) \to C([0, T], \mathbb{R}^7)$ denotes the Picard operator. This operator is defined as follows:

  \begin{equation}
\begin{split}
    P(\textbf{X}(t)) =\textbf{X}(0) + \int_0^t \mathrm{diag}\Bigg( & \frac{(t-\tau)^{\alpha_{S}-1}}{\Gamma(\alpha_{S})}, \frac{(t-\tau)^{\alpha_{E}-1}}{\Gamma(\alpha_{E})}, 
     \frac{(t-\tau)^{\alpha_{I_A}-1}}{\Gamma(\alpha_{I_A})}, \frac{(t-\tau)^{\alpha_{I_S}-1}}{\Gamma(\alpha_{I_S})}, \\
    & \frac{(t-\tau)^{\alpha_{R}-1}}{\Gamma(\alpha_{R})}, \frac{(t-\tau)^{\alpha_{D}-1}}{\Gamma(\alpha_{D})}, 
     \frac{(t-\tau)^{\alpha_{W}-1}}{\Gamma(\alpha_{W})} \Bigg) \mathbf{F}(\tau, \mathbf{X}(\tau)) \, d\tau.
\end{split}
\label{eq:longEquation}
\end{equation}
Simultaneously, we encounter the series of inequalities below:
\begin{equation}
    \begin{split}
        \|P(\textbf{X}(t)) - P(\textbf{X}^*(t))\|&=
        \Big\Vert\ 
        \int_0^t \mathrm{diag}\Bigg( \frac{(t-\tau)^{\alpha_{S}-1}}{\Gamma(\alpha_{S})}, \frac{(t-\tau)^{\alpha_{E}-1}}{\Gamma(\alpha_{E})},  \\ &
     \frac{(t-\tau)^{\alpha_{I_A}-1}}{\Gamma(\alpha_{I_A})}, \frac{(t-\tau)^{\alpha_{I_S}-1}}{\Gamma(\alpha_{I_S})}, 
     \frac{(t-\tau)^{\alpha_{R}-1}}{\Gamma(\alpha_{R})}, \frac{(t-\tau)^{\alpha_{D}-1}}{\Gamma(\alpha_{D})}, 
     \frac{(t-\tau)^{\alpha_{W}-1}}{\Gamma(\alpha_{W})} \Bigg) 
 \\ & 
     \times \left( \mathbf{F}(\tau, \mathbf{X}(\tau)) - \mathbf{F}(\tau, \mathbf{X}^*(\tau)) \right) \, d\tau \Big\Vert\ \\ &
     \leq
     \Big\Vert\ 
        \int_0^t \mathrm{diag}\Bigg( \frac{(t-\tau)^{\alpha_{S}-1}}{\Gamma(\alpha_{S})}, \frac{(t-\tau)^{\alpha_{E}-1}}{\Gamma(\alpha_{E})}, \\ &
     \frac{(t-\tau)^{\alpha_{I_A}-1}}{\Gamma(\alpha_{I_A})}, \frac{(t-\tau)^{\alpha_{I_S}-1}}{\Gamma(\alpha_{I_S})}, 
     \frac{(t-\tau)^{\alpha_{R}-1}}{\Gamma(\alpha_{R})}, \frac{(t-\tau)^{\alpha_{D}-1}}{\Gamma(\alpha_{D})}, 
     \frac{(t-\tau)^{\alpha_{W}-1}}{\Gamma(\alpha_{W})} \Bigg)d\tau  \Big\Vert\
 \\  &
     \times \sup_{\tau \in [0,T]} \| \mathbf{F}(\tau, \mathbf{X}(\tau)) - \mathbf{F}(\tau, \mathbf{X}^*(\tau)) \| \\ &
     \leq
     \max_{i= S, E, IA, IS , R, D, W} \int_0^t \frac{(t-\tau)^{\alpha_{i}-1}}{\Gamma(\alpha_{i})} d\tau \sup_{\tau \in [0,T]} \| \mathbf{F}(\tau, \mathbf{X}(\tau)) - \mathbf{F}(\tau, \mathbf{X}^*(\tau)) \| \\&
     \leq
      \Sigma  \max_{i= S, E, I_A, I_S , R, D, W} \frac{T^{\alpha_i}}{\Gamma(\alpha_i +1)}  \sup_{\tau \in [0,T]} \| \mathbf{X}(\tau) - \mathbf{X}^*(\tau) \|.
    \end{split}
\end{equation}
Given that $\Sigma \max_{i \in \{S, E, I_A, I_S, R, D, W\}} \frac{T^{\alpha_i}}{\Gamma(\alpha_i +1)} < 1$ for $t \leq T$, the operator $P$ is established as a contraction. Consequently, system \eqref{model} is guaranteed to have a unique solution, thereby completing the proof.

\end{proof}

\subsection{Boundedness of the Solution}
This section demonstrates the positive invariance and boundedness of the solutions within a biologically meaningful region for the model system \eqref{model}. 

\begin{theorem}
Let \(\Omega\) be the closed set defined by
\begin{equation}
\Omega = \left\{  \textbf{X}  \in \mathbb{R}^{7}_{+}: 0\leq N(t)\leq \frac{\Lambda}{\mu}, 0\leq W\leq \frac{\Lambda(\eta_A + \eta_S)}{\mu \mu_P} \right\},
\end{equation}
where \(N(t) = S(t) + E(t) + I_A(t) + I_S(t) + R(t) + D(t)\) represents the total population at time \(t\), excluding the environmental pathogen \(W\). Then, \(\Omega\) is positively invariant under the dynamics of the system \eqref{model} with commensurate orders for all \(t > 0\), meaning that any solution starting within \(\Omega\) remains in \(\Omega\) for all future times.
\end{theorem}

\begin{proof}
To establish the theorem, we initially consider the case of integer order derivatives. Define the total population at time \(t\) as \(N(t) = S(t) + E(t) + I_A(t) + I_S(t) + R(t) + D(t)\). According to system \eqref{model}, the rate of change of \(N(t)\) is given by
\begin{equation}
    N'(t) = \Lambda - \mu N(t),
\end{equation}
indicating a balance between birth and overall mortality rates. Multiplying both sides by \(e^{\mu t}\) and integrating yields
\begin{equation}
    e^{\mu t} N(t) = \frac{\Lambda}{\mu} e^{\mu t} + C,
\end{equation}
where \(C\) is a constant determined by initial conditions, specifically \(C = N_0 - \frac{\Lambda}{\mu}\) with \(N_0 = N(0)\). This leads to
\begin{equation}
    N(t) = N_0 e^{-\mu t} + \frac{\Lambda}{\mu}(1 - e^{-\mu t}),
\end{equation}
and accordingly,
\begin{equation}
    \lim \sup_{t\rightarrow \infty} N(t) = \frac{\Lambda}{\mu}.
\end{equation}

For environmental pathogens, considering \(I_A(t) + I_S(t) \leq N(t) \leq \frac{\Lambda}{\mu}\) and applying the last equation of \eqref{model}, we deduce
\begin{equation}
    W'(t) \leq (\eta_A + \eta_S) \frac{\Lambda}{\mu} - \mu_p W(t),
\end{equation}
which implies
\begin{equation}
    W(t) \leq W(0) e^{-\mu_p t} + \frac{(\eta_A + \eta_S) \Lambda}{\mu \mu_p}(1 - e^{-\mu_p t}),
\end{equation}
ensuring
\begin{equation}
    \lim \sup_{t \rightarrow \infty} W(t) \leq \frac{(\eta_A + \eta_S) \Lambda}{\mu \mu_p}.
\end{equation}

Extending to commensurate fractional derivatives (when all orders are $\alpha$) and utilizing a fractional comparison theorem, we find
\begin{equation}
N(t) = N_0 E_{\alpha, 1}(-\mu t^{\alpha}) + \left(\frac{\Lambda}{\mu}\right)^{\alpha} E_{\alpha, \alpha+1}(-\mu t^{\alpha}),
\end{equation}
where \(E_{\alpha, \beta}\) is  the asymptotic expansion of the Mittag-Leffler function. As \(t\) approaches infinity, this function's properties ensure \(N(t)\) approaches \((\frac{\Lambda}{\mu})^\alpha\), confirming that \(\Omega\) is positively invariant. Similarly, $\lim \sup_{t \rightarrow \infty} W(t) \leq \frac{(\eta_A^\alpha + \eta_S^\alpha) \Lambda^\alpha}{\mu^\alpha \mu_p^\alpha}$.
\end{proof}

It is important to note that our model incorporates incommensurate orders of derivatives, which introduces additional complexity to the analysis of system dynamics. Specifically, the traditional approach of equating the sum of compartmental populations to the total population may not hold precisely due to the asynchronous nature of the derivative orders. This deviation poses challenges in directly applying conventional methods to ascertain the boundedness and stability of solutions.

However, recent advancements in the field have shed light on handling such complexities. Under certain conditions, the exponential boundedness of solutions for systems with incommensurate fractional derivatives has been demonstrated \cite{diethelm2022asymptotic}. This suggests that, although the direct approach might falter, alternative strategies rooted in the latest mathematical frameworks can provide rigorous justification for the boundedness of solutions in our model. Such considerations are crucial for ensuring the robustness and biological fidelity of the model, especially when dealing with the behaviors exhibited by incommensurate order systems.

\subsection{Basic reproduction number}

The basic reproduction number is determined via the next-generation matrix method \cite{van}, applied to our model as specified in \eqref{model}. This approach focuses on calculating the rate at which new infections emerge within the infectious compartments, denoted as $\mathcal{X} = (E, I_A, I_S, W)$, and is described as follows:

\begin{equation*}
\mathcal{F(X)}=\begin{pmatrix}
\frac{\beta_1^{\alpha_E} S(t)W(t)}{1+\phi_1 W(t)}+\frac{\beta_2^{\alpha_E} S(t)\left(I_A(t)+ I_S(t)\right)}{1+\phi_2\left(I_A(t)+I_S(t)\right)}\\
0\\
0\\
0
\end{pmatrix},
\end{equation*}
and the rate of other transitions involving shedding compartment is obtained as follows\begin{equation*}
\mathcal{V(X)}=\begin{pmatrix}
(\psi^{\alpha_E} + \mu^{\alpha_E} + \omega^{\alpha_E})E\\
(\mu^{\alpha_{I_A}} + \sigma^{\alpha_{I_A}} + \gamma_A^{\alpha_{I_A}})I_A -(1-\delta)\omega^{\alpha_{I_A}} E\\
(\mu^{\alpha_{I_S}} + \sigma^{\alpha_{I_S}} + \gamma_S^{\alpha_{I_S}})I_S -\delta \omega^{\alpha_{I_S}} E\\
\mu_P^{\alpha_W} W -\eta_A^{\alpha_W} I_A - \eta_S^{\alpha_W} I_S.
\end{pmatrix}.
\end{equation*}
We identify a discrepancy in the original model \cite{mwalili} concerning the calculation of the basic reproduction number, \(R_0\). Specifically, the definitions of \(\mathcal{F}\) and \(\mathcal{V}\) in their work do not align with standard practices. To address this, we redefine $-\eta_A^{\alpha_W} I_A - \eta_S^{\alpha_W} I_S$ accurately as a factor in the transition of new infections, rather than as a contributor to the generation of new infections. Following this correction, the Jacobian matrices for \(\mathcal{F}\) and \(\mathcal{V}\) at the disease-free equilibrium (DFE) point, $\mathfrak{E}_0 = \left(\left(\frac{\Lambda}{\mu}\right)^{\alpha_S}, 0, 0, 0, 0, 0, 0\right)$, are derived as follows:

\begin{equation*}
\begin{split}
& J_{\mathcal{F}} = \left[ \begin{array}{cccc}
0& \beta_2^{\alpha_E}(\frac{\Lambda}{\mu})^{\alpha_S} & \beta_2^{\alpha_E}(\frac{\Lambda}{\mu})^{\alpha_S}  & \beta_1^{\alpha_E}(\frac{\Lambda}{\mu})^{\alpha_S} \\
0& 0 & 0&0\\
0&0& 0&0\\
0&0& 0&0
\end{array}
\right] \quad \text{ and } \\ &
J_{\mathcal{V}}= \left[ \begin{array}{cccc}
\psi^{\alpha_E} + \mu^{\alpha_E} + \omega^{\alpha_E
} & 0 & 0&0\\
-(1-\delta)\omega^{\alpha_{I_A}}& \mu^{\alpha_{I_A}} + \sigma^{\alpha_{I_A}} + \gamma_A^{\alpha_{I_A}} &0 &0\\
-\delta \omega^{\alpha_{I_S}}& 0& \mu^{\alpha_{I_S}} + \sigma^{\alpha_{I_S}} + \gamma_S^{\alpha_{I_S}} &0\\
0& -\eta_A^{\alpha_{W}}&-\eta_S^{\alpha_{W}} &\mu_P^{\alpha_{W}}
\end{array}
\right].
\end{split}
\end{equation*}
Finally, the basic reproduction number $R_0$ is obtained as the spectral radius of generation matrix $J_{\mathcal{F}}.J_{\mathcal{V}}^{-1}$, that is  precisely

\begin{equation}\label{R0}
\begin{split}
    R_0&= (\Lambda/\mu)^{\alpha_S}\left(\beta_2^{\alpha_E}\left(
    \frac{\delta \omega^{\alpha_{I_S}}}{\varpi_{is}\varpi_{e}}+ \frac{(1-\delta)\omega^{\alpha_{I_A}}}{\varpi_{ia}\varpi_{e}}\right)
    + \beta_1^{\alpha_E}  \left(\frac{\eta_S^{\alpha_W}\delta \omega^{\alpha_{I_S}}}{\mu_P^{\alpha_W}\varpi_{e}\varpi_{is}}+\frac{\eta_A^{\alpha_W}(1-\delta)\omega^{\alpha_{I_A}}}{\mu_P^{\alpha_W}\varpi_{e}\varpi_{ia}}\right)\right),
\end{split}
\end{equation}

where 
\[
\varpi_e = \psi^{\alpha_E} + \mu^{\alpha_E} + \omega^{\alpha_E} ,\quad \varpi_{ia}= \mu^{\alpha_{I_A}} + \sigma^{\alpha_{I_A}} + \gamma_A^{\alpha_{I_A}}, \quad \varpi_{is}= \mu^{\alpha_{I_S}} + \sigma^{\alpha_{I_S}} + \gamma_S^{\alpha_{I_S}}.
\]

% -------------------------------------------
\subsection{Local stability analysis}
In this section, we focus on establishing the local stability of the disease-free equilibrium point, denoted as $\mathfrak{E}_0$, for the model described in Equation \eqref{model}. This is achieved by analyzing the eigenvalues derived from the linearization matrix, specifically the system's Jacobian $J(\mathfrak{E}_0)$.

\begin{theorem}(\cite{LI20101810})\label{theo:2}
    Assuming the conditions $0<\alpha_i\leq 1$ for each $i=S, E, I_A, I_S, R, D, W$, let $M$ represent lowest common multiple of $r_i$ and $q_i$, where $\alpha_i=\frac{r_i}{q_i}$ with $r_i, q_i \in \mathbb{N}$, such that $gcd(r_i, q_i)=1, \forall i=S, E, I_A, I_S, R, D, W$. If every root $\lambda$ of the equation
    \begin{equation}
        \text{det}(\text{diag}(\lambda^{r_1},\lambda^{r_2},\lambda^{r_3},\lambda^{r_4},\lambda^{r_5},\lambda^{r_6},\lambda^{r_7})-J(\mathfrak{E}_0))=0,
    \end{equation}
    fulfills the condition $|\text{arg}(\lambda)|>\frac{\pi}{2M}$, then the system's equilibrium is locally asymptotically stable.
\end{theorem}

\begin{theorem}
    If $\alpha_E=\alpha_{I_A}=\alpha_{I_S}=\alpha_W=\alpha$ holds, the disease-free equilibrium point $\mathfrak{E}_0$ of model~\eqref{model} is locally asymptotically stable if the eigenvalues of $J(\mathfrak{E}_0)$ meet the criteria set forth in Theorem~\ref{theo:2}.
\end{theorem}
\begin{proof}
The Jacobian of Equation \eqref{model}, evaluated at the disease-free equilibrium $\mathfrak{E}_0$ is 

\[
J(\mathfrak{E}_0) = \begin{pmatrix}
-\mu^{\alpha_S}  & \psi^{\alpha_S} & \mathcal{A} & \mathcal{A} & 0 & 0 & \mathcal{B}  \\
0 & \mathcal{E} & \mathcal{C}  & \mathcal{C} & 0 & 0 & \mathcal{D} \\
0 & (1-\delta) \omega^{\alpha_{I_A}} & \mathcal{
F} & 0 & 0 & 0 & 0 \\
0 & \delta \omega^{\alpha_{I_S}} & 0 & \mathcal{
G} & 0 & 0 & 0\\
0 & 0 & \gamma_A^{\alpha_R} & \gamma_S^{\alpha_R} & -\mu^{\alpha_R} & 0 & 0 \\
0 & 0 & \sigma^{\alpha_D} & \sigma^{\alpha_D} & 0 & -\mu^{\alpha_D} & 0 \\
0 & 0 & \eta_A^{\alpha_W} & \eta_S^{\alpha_W} & 0 & 0 & -\mu_P^{\alpha_W} 
\end{pmatrix},
\]
where
$
\mathcal{A} = -\frac{\beta_2^{\alpha_S}\Lambda^{\alpha_S}}{\mu^{\alpha_S}},\quad
\mathcal{B} = -\frac{\beta_1^{\alpha_S}\Lambda^{\alpha_S}}{\mu^{\alpha_S}},\quad
\mathcal{C} = \frac{\beta_2^{\alpha_E}\Lambda^{\alpha_S}}{\mu^{\alpha_S}},\quad
\mathcal{D} = \frac{\beta_1^{\alpha_E}\Lambda^{\alpha_S}}{\mu^{\alpha_S}},\quad
\mathcal{E} = -(\psi^{\alpha_E} + \mu^{\alpha_E} +\omega^{\alpha_E}), \quad
\mathcal{F}=-(\mu^{\alpha_{I_A}} + \sigma^{\alpha_{I_A}} + \gamma_A^{\alpha_{I_A}}),\quad
\mathcal{G}=-(\mu^{\alpha_{I_S}} + \sigma^{\alpha_{I_S}} + \gamma_S^{\alpha_{I_S}}).
$

The characteristic equation derived from $J(\mathfrak{E}_0)$ simplifies to

\[\text{det}(\text{diag}(\lambda^{M\alpha_S},\lambda^{M\alpha_E},\lambda^{M\alpha_{I_A}},\lambda^{M\alpha_{I_S}},\lambda^{M\alpha_R},\lambda^{M\alpha_D},\lambda^{M\alpha_W})-J(\mathfrak{E}_0))=0\]
Given the assumption that $\alpha_E=\alpha_{I_A}=\alpha_{I_S}=\alpha_W=\alpha$ 
it follows that:
\begin{equation}\label{eq: char}
    (\lambda^{M\alpha_S}+\mu^{\alpha_S})(\lambda^{M\alpha_R}+\mu^{\alpha_R})(\lambda^{M\alpha_W}+\mu^{\alpha_W})(\lambda^{4M\alpha}+\mathcal{H}_1\lambda^{3M\alpha}+\mathcal{H}_2\lambda^{2M\alpha}+\mathcal{H}_3\lambda^{M\alpha}+\mathcal{H}_4)=0,
\end{equation}
where
\begin{equation*}
    \begin{split}
        \mathcal{H}_1 &=\mathcal{E}+\mathcal{F}+\mathcal{G}+\mu_P^\alpha \\
        \mathcal{H}_2 &=\mathcal{EF}+\mathcal{EG}+\mathcal{FG}+\mu_P^{\alpha}(\mathcal{E}+\mathcal{F+\mathcal{G}})-\mathcal{C}\omega^\alpha\\
        \mathcal{H}_3 &=\mathcal{EFG}-\omega^\alpha(\mathcal{C}(\mathcal{G}-\mu_P^\alpha+(\mathcal{G}-\mathcal{F})\delta)+\mathcal{D}\eta_S^\alpha)+\mathcal{D}\eta_A^\alpha\omega^\alpha(1-\delta)+\mu_P^\alpha(\mathcal{EF}+\mathcal{EG}+\mathcal{FG})\\
        \mathcal{H}_4 &= \mathcal{EFG}\mu_P^\alpha + \omega^\alpha(\mathcal{G}(1-\delta)(\mathcal{D}\eta_A^\alpha+\mathcal{C}\mu_P^\alpha)+\mathcal{F}\delta(\mathcal{D}\eta_S^\alpha - \mathcal{C}\mu_P^\alpha)).
    \end{split}
\end{equation*}

From the initial terms of Equation \eqref{eq: char}, specifically $(\lambda^{M\alpha_S}+\mu^{\alpha_S})=0$, we directly obtain:
\begin{equation*}
      \text{arg}(\lambda_{1,k})=\frac{\pi}{M\alpha_S}+2\frac{k\pi}{M\alpha_S}.
\end{equation*}
This leads us to conclude:
\begin{equation*}
    |\text{arg}(\lambda_{1,k})|>\frac{\pi}{2M}, \quad k=0, 1, ... , M\alpha_S-1. 
\end{equation*}
Likewise, examining the second and third terms of Equation \eqref{eq: char}, it follows that:
\begin{equation*}
\begin{split}
    |\text{arg}(\lambda_{2,k})|&>\frac{\pi}{2M}, \quad k=0, 1, ... , M\alpha_R-1,\\
    |\text{arg}(\lambda_{3,k})|&>\frac{\pi}{2M}, \quad k=0, 1, ... , M\alpha_W-1.
\end{split}
\end{equation*}
The remaining eigenvalues can be determined as the solutions to the equation
\begin{equation*}
\mathfrak{P}(\lambda^{M\alpha})= (\lambda^{M\alpha})^4+\mathcal{H}_1(\lambda^{M\alpha})^3+\mathcal{H}_2(\lambda^{M\alpha})^2+\mathcal{H}_3(\lambda^{M\alpha})+\mathcal{H}_4=0.
\end{equation*}

Based on Theorem \ref{theo:2}, it is required that \(|\text{arg}(\lambda)| > \frac{\pi}{2M}\). Given that \(M\alpha\) is a positive integer, this implies \(M\alpha|\text{arg}(\lambda)| > \frac{\alpha\pi}{2}\), which in turn implies \(|\text{arg}(\lambda^{M\alpha})| > \frac{\alpha\pi}{2}\). This satisfies the Routh-Hurwitz conditions, as elucidated in \cite{AHMED2007607, MATOUK20092166}, which are necessary and sufficient for the stability of the roots of the polynomial equation. We now proceed to define the discriminant of \(\mathfrak{P}(\lambda)\) as follows:

\[ \mathfrak{D}(\mathfrak{P}) = \begin{bmatrix}
1 & \mathcal{H}_1 & \mathcal{H}_2 & \mathcal{H}_3 & \mathcal{H}_4 & 0 & 0 \\
0 & 1 & \mathcal{H}_1 & \mathcal{H}_2 & \mathcal{H}_3 & \mathcal{H}_4 & 0 \\
0 & 0 & 1 & \mathcal{H}_1 & \mathcal{H}_2 & \mathcal{H}_3 & \mathcal{H}_4 \\
4 & 3\mathcal{H}_1 & 2\mathcal{H}_2 & \mathcal{H}_3 & 0 & 0 & 0 \\
0 & 4 & 3\mathcal{H}_1 & 2\mathcal{H}_2 & \mathcal{H}_3 & 0 & 0 \\
0 & 0 & 4 & 3\mathcal{H}_1 & 2\mathcal{H}_2 & \mathcal{H}_3 & 0 \\
0 & 0 & 0 & 4 & 3\mathcal{H}_1 & 2\mathcal{H}_2 & \mathcal{H}_3 \\
\end{bmatrix}. \]

Leveraging insights from \cite{AHMED2007607, MATOUK20092166}, the stability criteria for the equilibrium point $\mathfrak{E}_0$ are as follows:

\begin{enumerate}
    \item If
    \[
    \Delta_1 = \mathcal{H}_1, \quad \Delta_2 = \begin{vmatrix}
    \mathcal{H}_1 & 1 \\
    \mathcal{H}_3 & \mathcal{H}_2 \\
    \end{vmatrix}, \quad \Delta_3 = \begin{vmatrix}
    \mathcal{H}_1 & 1 & 0 \\
    \mathcal{H}_3 & \mathcal{H}_2 & \mathcal{H}_1 \\
    0 & \mathcal{H}_4 & \mathcal{H}_3 \\
    \end{vmatrix},
    \]
    then in the scenario where \(\alpha = 1\), the necessary and sufficient condition for the equilibrium point \(\mathfrak{E}_0\) to be locally asymptotically stable are
    \[
    \Delta_1 > 0, \Delta_2 > 0, \Delta_3 = 0, \mathcal{H}_4 > 0,
    \]
    and the above conditions are sufficient for \(\mathfrak{E}_0\) to be locally asymptotically stable for all \(\alpha \in [0, 1)\).
    
    \item If \(\mathfrak{D}({\mathfrak{P}}) > 0, \mathcal{H}_1 > 0, \mathcal{H}_2 < 0\) and \(\alpha > \frac{2}{3}\), then the equilibrium point \(\mathfrak{E}_0\) is unstable.
    
    \item If \(\mathfrak{D}({\mathfrak{P}}) < 0, \mathcal{H}_1 > 0, \mathcal{H}_2 > 0, \mathcal{H}_3 > 0, \mathcal{H}_4 > 0\) and \(\alpha < \frac{1}{3}\), then the equilibrium point \(\mathfrak{E}_0\) is locally asymptotically stable.
    
    Furthermore, if \(\mathfrak{D}(\mathfrak{P}) < 0, \mathcal{H}_1 < 0, \mathcal{H}_2 > 0, \mathcal{H}_3 < 0, \mathcal{H}_4 > 0\), then the equilibrium point \(\mathfrak{E}_0\) is unstable.
    
    \item If
    \(\displaystyle{\mathfrak{D}(\mathfrak{P}) < 0, \mathcal{H}_1 > 0, \mathcal{H}_2 > 0, \mathcal{H}_3 > 0, \mathcal{H}_4 > 0 \text{ and } \frac{\mathcal{H}_1 \mathcal{H}_4}{\mathcal{H}_2 \mathcal{H}_3} + \frac{\mathcal{H}_3}{\mathcal{H}_1} > \frac{\mathcal{H}_4}{\mathcal{H}_1},}
    \)
    then the equilibrium point \(\mathfrak{E}_0\) is locally asymptotically stable, for all \(\alpha \in (0, 1)\).
    
    \item \(\mathcal{H}_4 > 0\) is the necessary condition for the equilibrium point \(\mathfrak{E}_0\) to be locally asymptotically stable.
\end{enumerate}

\end{proof}

\subsection{Global stability analysis}
Here, we investigate the global stability of the steady state when the disease is dying out in the population.

\begin{theorem}
The disease-free equilibrium point ($\mathfrak{E}_0$) of system \eqref{model} is globally asymptotically stable whenever $R_0 \leq 1$.
\end{theorem}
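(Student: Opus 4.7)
The plan is to construct a linear Lyapunov-type functional on the infected compartments, using the weights furnished by the next-generation matrix, and to drive it to zero through a fractional comparison argument. Because $R_0$ is the spectral radius of $J_{\mathbb{F}} J_{\mathbb{V}}^{-1}$, Perron--Frobenius theory supplies a nonnegative left eigenvector $w=(w_E,w_{I_A},w_{I_S},w_W)^T$ such that $w^T J_{\mathbb{F}} \leq R_0\, w^T J_{\mathbb{V}}$. This is the natural weight vector for the candidate Lyapunov function $L=w_E E+w_{I_A}I_A+w_{I_S}I_S+w_W W$.

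Before invoking $L$, I would set up the positively invariant feasibility region $\Omega$ using the fractional comparison principle applied to the total-population equation \eqref{human_pop}: from nonnegative initial data, $N$ stays bounded, hence so does every individual compartment. On $\Omega$ the inequality $S(t)\leq S^\ast := \Lambda N^\ast/\mu$ holds (at least asymptotically), and because $S/(1+\phi_1 W)\leq S$ and $S/(1+\phi_2(I_A+I_S))\leq S$, the nonlinear incidences are majorised by their linearisations at the $dfe$. The infected subsystem therefore satisfies, componentwise,
\[
\bigl(\,^{C}D^{\alpha_E}E,\;^{C}D^{\alpha_{I_A}}I_A,\;^{C}D^{\alpha_{I_S}}I_S,\;^{C}D^{\alpha_W}W\,\bigr)^T \;\leq\; (J_{\mathbb{F}}-J_{\mathbb{V}})\,\mathcal{X}.
\]

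The delicate point is the incommensurate nature of the orders: a linear combination of Caputo derivatives of different orders is not itself a Caputo derivative, so one cannot simply take $^{C}D^{\alpha}L$. I would instead multiply the $i$-th row of the inequality above by $w_i>0$, and use the defining property of $w$ to obtain a right-hand side bounded by $(R_0-1)$ times a nonnegative combination of the infected states. A componentwise fractional comparison argument (in the spirit of Vargas-De-Le\'on's Caputo Lyapunov inequality, applied order by order) then forces each of $E,I_A,I_S,W$ to approach zero whenever $R_0\leq 1$: strict Mittag-Leffler decay when $R_0<1$, and a fractional LaSalle invariance conclusion on the residual zero set when $R_0=1$.

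Once the infected compartments vanish in the limit, the remaining equations of \eqref{model} reduce asymptotically to the scalar fractional relaxation equations $^{C}D^{\alpha_R}R=-\mu R$ and $^{C}D^{\alpha_S}S=\Lambda N-\mu S$, whose solutions converge by Mittag-Leffler estimates to $0$ and to $S^\ast=\Lambda N^\ast/\mu$ respectively; hence the full trajectory tends to the $dfe$. The main obstacle I foresee is not the algebra but precisely the rigorous handling of the incommensurate step: justifying that a weighted combination of inequalities in different Caputo orders can be propagated to a single decay conclusion, either via careful componentwise comparison or by dominating the system with an auxiliary commensurate majorant whose DFE is globally stable under the same threshold condition $R_0\leq 1$.
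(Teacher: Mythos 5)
Your proposal is essentially the paper's own proof: the paper uses the same linear Lyapunov function $V=b_0E+b_1I_A+b_2I_S+b_3W$ on the infected compartments, with explicit weights $b_0=\mu_P\varpi_{ia}\varpi_{is}\mu$, $b_1=\Lambda N(\beta_1\eta_A+\beta_2\mu_P)\varpi_{is}$, $b_2=\Lambda N(\beta_1\eta_S+\beta_2\mu_P)\varpi_{ia}$, $b_3=\Lambda N\beta_1\varpi_{ia}\varpi_{is}$ that are precisely a realization of your Perron left-eigenvector weights, the same bound $S\leq\Lambda N/\mu$, the same majorization of the saturated incidences by their linearizations, the resulting coefficient $(R_0-1)$ on the $E$ term, and the same LaSalle invariance conclusion. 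The one point where you go beyond the paper is the incommensurate-order difficulty: the paper simply writes a single order $\alpha$ and invokes ``linearity of the Caputo derivative,'' which is exactly the step you correctly identify as not literally valid when $\alpha_E,\alpha_{I_A},\alpha_{I_S},\alpha_W$ differ and as requiring a componentwise comparison or a commensurate majorant to be made rigorous.
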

\begin{proof}
Let us consider the following Lyapunov function:
\[
V(t)= b_0E(t)+b_1I_A(t)+b_2 I_S(t)+b_3W(t),
\]
where $b_1, \, b_1, \, b_2,$ and $b_3$ are positive constant to be determined.

By linearity of the Caputo derivative, we have
\[
^{C}D^{\alpha}_{0+}V(t)= b_0 {}^{C}D^{\alpha}_{0+}E(t)+b_1 {}^{C}D^{\alpha}_{0+}I_A(t)+b_2 {}^{C}D^{\alpha}_{0+}I_S(t)+b_3 {}^{C}D^{\alpha}_{0+}W(t).
\]
Next, by substituting expression of $^{C}D^{\alpha}_{0+}E(t)$, $^{C}D^{\alpha}_{0+}I_A(t)$, $^{C}D^{\alpha}_{0+}I_S(t)$ and $^{C}D^{\alpha}_{0+}W(t)$ from system model \eqref{model}, we obtain
\begin{align*}
^{C}D^{\alpha}_{0+}V(t)&=b_0\left( \frac{\beta_1^{\alpha_E} S(t)W(t)}{1+\phi_1 W(t)}+\frac{\beta_2^{\alpha_E}S(t)\left(I_A(t)+ I_S(t)\right)}{1+\phi_2\left(I_A(t)+I_S(t)\right)}-\psi^{\alpha_E} E(t) -\mu^{\alpha_E} E(t) -\omega^{\alpha_E} E(t) \right)\\
&+ b_1 \left( (1-\delta)\omega^{\alpha_{I_A}} E(t) - (\mu^{\alpha_{I_A}} + \sigma^{\alpha_{I_A}})I_A(t) - \gamma_A^{\alpha_{I_A}} I_A(t) \right) \\
& + b_2 \left( \delta \omega^{\alpha_{I_S}} E(t) - (\mu^{\alpha_{I_S}} + \sigma^{\alpha_{I_S}})I_S(t) -\gamma_S^{\alpha_{I_S}} I_S(t) \right)\\
& + b_3 \left( \eta_A^{\alpha_{W}} I_A(t) + \eta_S^{\alpha_{W}} I_S(t)-\mu_P^{\alpha_{W}} W(t) \right).
\end{align*}
Since, the inequality $S\leq (\frac{\Lambda }{\mu})^{\alpha_S}$ holds, it follows that
\begin{align*}
^{C}D^{\alpha}_{0+}V(t)&\leq
b_0\left( (\frac{\Lambda }{\mu})^{\alpha_S}\left(\frac{\beta_1^{\alpha_E} W(t)}{1+\phi_1 W(t)}+\frac{\beta_2^{\alpha_E}\left(I_A(t)+ I_S(t)\right)}{1+\phi_2\left(I_A(t)+I_S(t)\right)}\right)-\psi^{\alpha_E} E(t) -\mu^{\alpha_E} E(t) -\omega^{\alpha_E} E(t) \right)\\
& + b_1 \left( (1-\delta)\omega^{\alpha_{I_A}} E(t) - (\mu^{\alpha_{I_A}} + \sigma^{\alpha_{I_A}})I_A(t) - \gamma_A^{\alpha_{I_A}} I_A(t) \right) \\
& + b_2 \left( \delta \omega^{\alpha_{I_S}} E(t) - (\mu^{\alpha_{I_S}} + \sigma^{\alpha_{I_S}})I_S(t) -\gamma_S^{\alpha_{I_S}} I_S(t) \right)\\
& + b_3 \left( \eta_A^{\alpha_{W}} I_A(t) + \eta_S^{\alpha_{W}} I_S(t)-\mu_P^{\alpha_{W}} W(t) \right).
\end{align*}
Note that because parameters and state variables are positive, we have
\[
\displaystyle{\frac{1}{1+ \phi_1 W}\leq 1,\quad \textup{ and } \quad \frac{1}{1 + \phi_2 (I_A + I_S)}\leq 1.}
\]
It follows that
\begin{align*}
^{C}D^{\alpha}_{0+}V(t)&\leq
b_0\left( (\frac{\Lambda }{\mu})^{\alpha_S}\left(\beta_1^{\alpha_E} W(t)+\beta_2^{\alpha_E}(I_A(t)+ I_S(t))\right)-\psi^{\alpha_E} E(t) -\mu^{\alpha_E} E(t) -\omega^{\alpha_E} E(t) \right)\\
& + b_1 \left( (1-\delta)\omega^{\alpha_{I_A}} E(t) - (\mu^{\alpha_{I_A}} + \sigma^{\alpha_{I_A}})I_A(t) - \gamma_A^{\alpha_{I_A}} I_A(t) \right) \\
& + b_2 \left( \delta \omega^{\alpha_{I_S}} E(t) - (\mu^{\alpha_{I_S}} + \sigma^{\alpha_{I_S}})I_S(t) -\gamma_S^{\alpha_{I_S}} I_S(t) \right)\\ 
& + b_3 \left( \eta_A^{\alpha_{W}} I_A(t) + \eta_S^{\alpha_{W}} I_S(t)-\mu_P^{\alpha_{W}} W(t) \right).
\end{align*}

Rearranging and reducing lead to the following expression
\begin{align}\label{equistab}
^{C}D^{\alpha}_{0+}V(t)& \notag \leq
(b_2 \delta \omega^{\alpha_{I_S}} - b_0 \varpi_{e} - b_1 \omega^{\alpha_{I_A}} (\delta - 1)) E(t) \\ \notag
&+(b_3 \eta_A^{\alpha_W} - b_1 \varpi_{ia} + b_0 \beta_2^{\alpha_E} ( \frac{\Lambda}{\mu} )^{\alpha_S}) I_A(t) \\
&+(b_3 \eta_S^{\alpha_W} - b_2 \varpi_{is} + b_0 \beta_2^{\alpha_E} ( \frac{\Lambda}{\mu} )^{\alpha_S}) I_S(t) \\ \notag
&+(b_0 \beta_1^{\alpha_E} ( \frac{\Lambda}{\mu} )^{\alpha_S} - b_3 \mu_P^{\alpha_W}) W(t),
\end{align}
in which $\varpi_{is}=\gamma_S^{\alpha_{I_S}} + \mu^{\alpha_{I_S}} + \sigma^{\alpha_{I_S}}$, $\varpi_{ia}=\gamma_A^{\alpha_{I_A}} + \mu^{\alpha_{I_A}} + \sigma^{\alpha_{I_A}}$, and $\varpi_{e}=\mu^{\alpha_E} + \omega^{\alpha_E} + \psi^{\alpha_E}$.
Therefore, by choosing
\[
b_0= \mu_{P}^{\alpha_W}\varpi_{ia}\varpi_{is}\mu^{\alpha_S}, \quad b_1= \Lambda^{\alpha_S}(\beta_1^{\alpha_E} \eta_A^{\alpha_W} + \beta_2^{\alpha_E} \mu_{P}^{\alpha_W})\varpi_{is},\]
\[ b_2= \Lambda^{\alpha_S}(\beta_1^{\alpha_E} \eta_S^{\alpha_W} +  \beta_2^{\alpha_E} \mu_P^{\alpha_W})\varpi_{ia} \quad b_3=\Lambda^{\alpha_S} \beta_1^{\alpha_E} \varpi_{ia}\varpi_{is},
\]
it easy to see that $V$ is continuous and positive definite for all $E(t)>0$, $I_A(t)>0$, $I_S(t)>0$ and $W(t)>0$.
As a consequence, we obtain that
\begin{gather*}
b_3 \eta_A^{\alpha_W} - b_1 \varpi_{is} + b_0 \beta_2^{\alpha_E} ( \frac{\Lambda}{\mu} )^{\alpha_S}=0, \quad b_3 \eta_S^{\alpha_W} - b_2 \varpi_{ia} + b_0 \beta_2^{\alpha_E} ( \frac{\Lambda}{\mu} )^{\alpha_S}=0, \quad  b_0 \beta_1^{\alpha_E} ( \frac{\Lambda}{\mu} )^{\alpha_S} - b_3 \mu_P^{\alpha_W}=0,\\ \text{ and } \quad  b_2 \delta \omega^{\alpha_{I_S}} - b_0 \varpi_{e} - b_1 \omega^{\alpha_{I_A}} (\delta - 1)= \mu_{P}^{\alpha_W}\varpi_{ia}\varpi_{is}\mu^{\alpha_S} (R_0 -1).
\end{gather*}
Hence, putting altogether in the inequality \eqref{equistab}, we get
\[
^{C}D^{\alpha}_{0+}V(t) \leq \mu_{P}\varpi_{ia}\varpi_{is}\mu (R_0 -1).
\]
Finally, $^{C}D^{\alpha}_{0+}V(t) \leqslant 0$ if $R_0 \leqslant 1$. In addition, it is not hard to verify that the largest invariant set of $\left\{ \left(S, E, I_A, I_S, R, D, W \right) \in \mathbb{R}^{7}:\; ^{C}D^{\alpha}_{0+}V(t)=0 \right\}$ is the singleton $\{ \mathfrak{E}_0 \}$. Hence, by LaSalle's invariance principle~\cite{LASALLE196857}, we conclude that the disease-free equilibrium $\mathfrak{E}_0$ is globally asymptotically stable.

\end{proof}

Our examination of global stability spans the entire spectrum of $R_0$ values, albeit within the scope of commensurate orders. 
Global stability of the commensurate fractional form of model~\eqref{model} is confirmed through Ulam-Hyers stability~\cite{jung2011hyers}, with recent COVID-19 model applications in~\cite{cmc.2020.011508}.

To provide a foundation for the subsequent discussion, we introduce the inequality expressed as follows:
\begin{equation}\label{eq:10}
    |{}^{C}D^{\alpha}_{0+}\textbf{X}(t)-\textbf{F}(t, \textbf{X}(t))|\leq \epsilon, \quad t\in [0,T].
\end{equation}
We say a function $\Bar{\textbf{X}}\in \mathbb{R}^7_+$ is a solution of \eqref{eq:10} if and only if there exists $h \in \mathbb{R}^7_+$ satisfying

\begin{itemize}
    \item[1.] $|h(t)| \leq \epsilon$
    \item[2.] ${}^{C}D^{\alpha}_{0+}\Bar{\textbf{X}}(t)=\textbf{F}(t, \Bar{\textbf{X}}(t))+h(t), \quad t\in [0,T]$.
\end{itemize}

Notably, by applying Equation \eqref{eq11} alongside property 2 mentioned above, straightforward simplification reveals that any function $\bar{\mathbf{X}} \in \mathbb{R}^7_+$ meeting the conditions of Equation \eqref{eq:10} likewise fulfills the following associated integral inequality:

\begin{equation}\label{eq:32}
    |\Bar{\textbf{X}}(t)-\Bar{\textbf{X}}(0)- \frac{1}{\Gamma(\alpha)}\int_0^t(t-\tau)^{\alpha-1}\textbf{F}(\tau,\Bar{\textbf{X}}(\tau))| \leq \frac{T^\alpha}{\Gamma(\alpha+1)} \epsilon.
\end{equation}

Let $\mathfrak{F}=C([0,T];\mathbb{R})$ denote the Banach space of all continuous functions from $[0,T]$ to $\mathbb{R}$ equipped with the norm $\|\textbf{X}\|_{\mathfrak{F}}=\sup_{t\in [0,T]}\{|\textbf{X}|\}$, where $|\textbf{X}|=|S(t)|+|E(t)|+|I_A(t)|+|I_S(t)|+|R(t)|+|D(t)|+|W(t)|$.

The fractional order model \eqref{model} achieves Ulam-Hyers stability if there are some $\Sigma > 0$ ensuring that, for any given $\epsilon$, and for every solution $\Bar{\textbf{X}}$ meeting the conditions of \eqref{eq:10}, a corresponding solution $\textbf{X}$ to \eqref{short-model} can be found where 
\begin{equation}
    \|\Bar{\textbf{X}}(t) - \textbf{X}(t) \|_{\mathfrak{F}} \leq \Sigma \epsilon, \quad t \in [0,T].
\end{equation}
Moreover, this model is deemed to be generalized Ulam-Hyers stable if a continuous function $\Sigma_{F}: \mathbb{R}+ \to \mathbb{R}+$ exists, satisfying $\Sigma_F(0)=0$. This condition requires that, for any solution $\Bar{\textbf{X}}$ of \eqref{eq:10}, there must be a corresponding solution $\textbf{X}$ of \eqref{short-model} for which
\begin{equation}
    \|\Bar{\textbf{X}}(t) - \textbf{X}(t) \|_{\mathfrak{F}} \leq \Sigma_F \epsilon, \quad t \in [0,T].
\end{equation}
We proceed to detail the stability results for the fractional order model.

\begin{theorem}
Assuming the conditions and conclusions of Lemma~\ref{lemm: lips} and Theorem \ref{theo: 1} are satisfied, i.e. $\Sigma \frac{T^\alpha}{\Gamma(\alpha+1)}<1$, it follows that the model specified in \eqref{short-model} exhibits generalized Ulam-Hyers stability.
\end{theorem}

\begin{proof}
    Given that $\textbf{X}$ is a unique solution to \eqref{short-model} confirmed by Lemma~\ref{lemm: lips} and Theorem \ref{theo: 1}, and $\Bar{\textbf{X}}$ meets the criteria of \eqref{eq:10}, reference to equations \eqref{eq11} and \eqref{eq:32} leads us to conclude that for any $\epsilon > 0$ and $t \in [0, T]$, the following relationship holds:
    \begin{equation}
\begin{split}
\| \Bar{\textbf{X}} - \textbf{X} \|_{\mathfrak{F}}  & = \sup_{t \in [0,T]} |\Bar{\textbf{X}} - \textbf{X}| \\
& = \sup_{t \in [0,T]} \left| \Bar{\textbf{X}} - {\textbf{X}}_0 - \frac{1}{\Gamma(\alpha)} \int_0^t (t - \tau)^{\alpha-1} \textbf{F}(t, \textbf{X}(\tau)) d\tau \right| \\
& \leq \sup_{t \in [0,T]} \left| \Bar{\textbf{X}}(t) - \Bar{\textbf{X}}_0 - \frac{1}{\Gamma(\alpha)} \int_0^t (t - \tau)^{\alpha-1} \textbf{F}(t, \Bar{\textbf{X}}(\tau)) d\tau \right| \\
& \quad + \sup_{t \in [0,T]}  \frac{1}{\Gamma(\alpha)} \int_0^t (t - \tau)^{\alpha-1} \left| \textbf{F}(t, \Bar{\textbf{X}}(\tau)) - \textbf{F}(t, \textbf{X}(\tau)) \right| d\tau  \\
& \leq \frac{\epsilon T^\alpha}{\Gamma(\alpha+1)} + \frac{\Sigma}{\Gamma(\alpha)} \sup_{t \in [0,T]} \int_0^t (t - \tau)^{\alpha-1} |\Bar{\textbf{X}}(\tau) - {\textbf{X}}(\tau)| d\tau \\
& \leq \frac{\epsilon T^\alpha}{\Gamma(\alpha+1)} + \frac{\Sigma T^\alpha}{\Gamma(\alpha+1)} \|\Bar{\textbf{X}}(\tau) - {\textbf{X}}(\tau)\|_{\mathfrak{F}}.
\end{split}
\end{equation}
From this, we derive that $\|\Bar{\textbf{X}} - \textbf{X}\|_{\mathfrak{F}} \leq \Sigma_F \epsilon$, with $\Sigma_F$ defined as $\frac{T^{\alpha}}{\Gamma(\alpha+1)-T^\alpha \Sigma}$.
\end{proof}

\section{Numerical results}
\label{sec:numerical}

In this section, we demonstrate the capability of Model \eqref{model} to simulate the dynamics of COVID-19 transmission. This includes consideration for environmental contamination by infected individuals. Additionally, we compare the basic reproduction number, $R_0$, and its sensitivity across various scenarios and model configurations. The model is fine-tuned using data collected from South Africa, obtained from Johns Hopkins University's Center for Systems Science and Engineering (CSSE)~\cite{DataCSSE}. 

South Africa confirmed its first case of COVID-19 in March 2020, prompting a National State of Disaster to be declared on March 15. A nationwide lockdown was then imposed on March 26. To better understand the spread of the disease in South Africa, our study focuses on simulating its propagation from the start of the lockdown measures until the end of the first peak of the epidemic, which occurred on September 22, 2020. In light of the United Nations documentation, as reported by the Department of Economic and Social Affairs, Population Division, World Population Prospects 2022, we incorporate a birth rate of $\Lambda=19.995/1000$ (19.995 births per 1000 individuals) and a natural human death rate of $\mu=9.468/1000$, while also assuming the initial conditions $I_S(0)=17, R(0)=0$, and $D(0)=0$, based on the early phase of the epidemic.

\begin{figure}[ht!]
    \centering
    \includegraphics[width=1\textwidth]{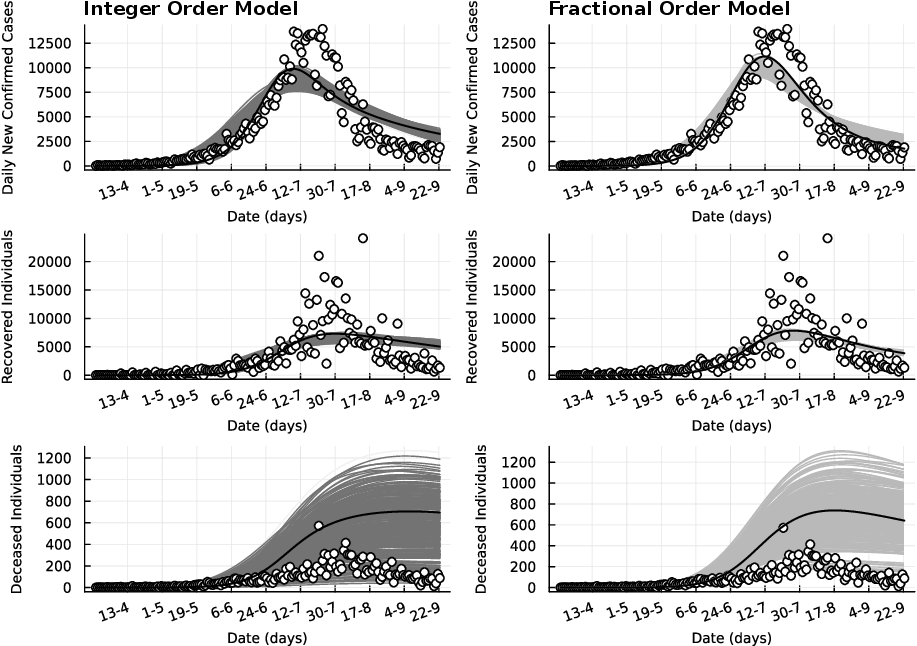}
    \caption{Model Accuracy Assessment: It demonstrates the accuracy of the model \eqref{model}, with integer order derivatives (left panels) and fractional order derivatives (right panels), in fitting the daily new confirmed cases, recovered individuals, and deceased individuals, obtained from CSSE~\cite{DataCSSE}. The circles represent the data points. The grey curves show the model's results with estimated parameter values and initial conditions, and the optimal fit across all three categories, evaluated via root mean square deviation (RMSD), is depicted with black curves.}
        \label{fig:fit}
\end{figure}

\begin{figure}[ht!]
    \centering
    \includegraphics[width=.5\textwidth]{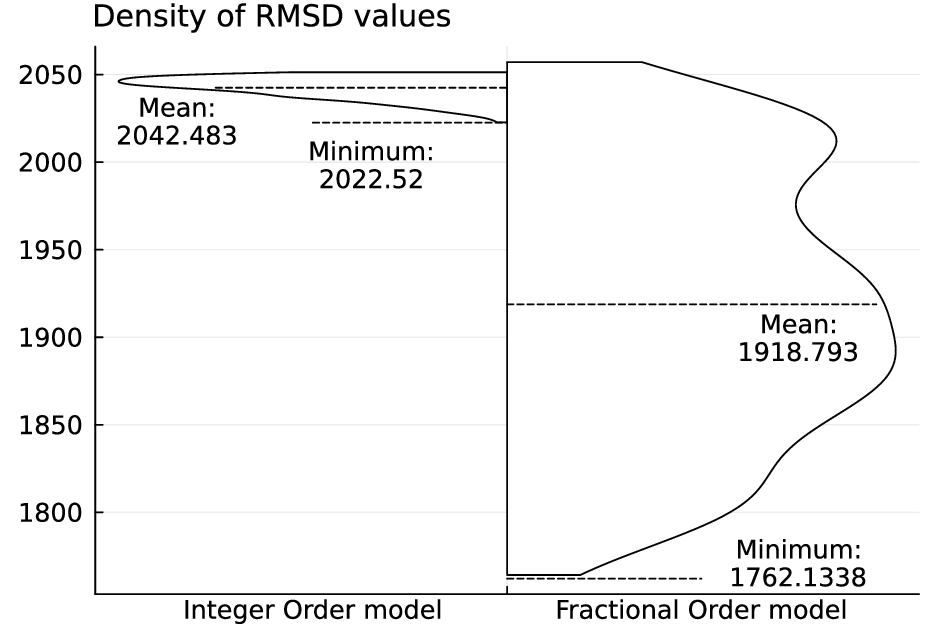}
    \caption{Comparative Error Score Distribution: It illustrates the distribution of root mean square deviation (RMSD) comparing real data with the fitted models using integer (left) and fractional (right) orders.}
        \label{fig:errors}
\end{figure}

We determine the remaining parameter values and initial conditions by fitting model \eqref{model} to the available data: daily new confirmed cases, recovered individuals, and deceased individuals. 
For parameters \(\mu_p\), \(\phi_1\), \(\phi_2\), \(\beta_1\), \(\beta_2\), \(\delta\), \(\psi\), \(\omega\), \(\sigma\), \(\gamma_S\), \(\gamma_A\), \(\eta_S\), \(\eta_A\), we assume truncated normal distributions. These parameters must fall within a certain range of 0 to 1. The prior for the initial susceptible population, \(S(0)\), is set to a truncated normal distribution with a wide range (1000 to 1500000) and large deviation, reflecting the variability in population sizes for different regions or countries. The initial conditions for exposed, \(E(0)\), asymptomatic infectious, \(I_A(0)\), and pathogen presence, \(W(0)\), compartments are also set to truncated normal distributions. These distributions have ranges reflecting possible small initial values (0 to 300), based on the early phase of an epidemic when these compartments start to grow.

We optimize the order derivatives to examine the efficacy of fractional orders in the fitting process. We utilize the root mean square deviation (RMSD) as a measure of the model accuracy, where RMSD$(x,\widehat{x})=\sqrt{\frac{1}{n}\sum_{t=1}^n(x_t-\widehat{x}_t)^2}$. Here, $n$ denotes the number of data points, $x$ denotes the estimated values, and $\widehat{x}$ denotes the actual values. Notably, considering biologically or clinically meaningful parameters plays a crucial role when fitting a model to data. For instance, previous research studies \cite{NGONGHALA2020108364, ferguson2020report} suggest that the parameters $\gamma_S$ and $\gamma_A$ should be confined to the range of (0,1). Given this point, we have chosen to display a set of fitted results, rather than solely presenting the best fit, to demonstrate the overall efficiency of the model utilizing fractional calculus.

In Figure~\ref{fig:fit}(left panels), the Bayesian inference method has been employed to provide initial estimates for both the parameter values and initial conditions, as the results depicted by the grey curves and the black curve represents the best fit with an RMSD of 2022.52. In Figure~\ref{fig:fit}(right panels), we have further optimized the order derivatives, using the obtained parameters and conditions. The grey curves depict their results and the black curve shows the best fit with an RMSD of 1762.13. 

Figure~\ref{fig:errors} depicts the density of errors for the fitted parameters and order derivatives, categorized into integer and fractional orders, represented by the left and right sides of the violin plot, respectively. The results reveal that the mean and minimum errors of the model with fractional orders are comparatively lower than those of the model with integer orders. It highlights the enhanced flexibility and improved data fitting by incorporating fractional derivatives into the model. 

The tables presenting the optimal values of the parameters, initial conditions, and order derivatives, along with their corresponding statistical properties such as mean, standard deviation, and median can be found in Table~\ref{tab:Sensitivity}. 

\begin{table}[th!]
\centering
\caption{Statistical Properties of Model Parameters: This table presents the mean, standard deviation, and median of model parameters, order derivatives, and initial conditions, alongside the sensitivity index for $R_0$ ($\mathcal{S}_{\mathcal{P}}^{R_0}$) considering optimized values within integer (ODE) and fractional order derivatives (FDE) models.}
\begin{tabular}{c|ccc|cccc}
\hline
Parameters & Mean & STD & Median & \multicolumn{2}{c}{Optimized value} &  \multicolumn{2}{c}{Sensitivity}  \\
 &      &     &        &     ODE & FDE       & ODE & FDE \\ \hline
$\Lambda $ &  - & -& -& - & - & 1.0000& 1.0000\\
$\mu$ & - & -& -& - & - &  -0.7109& -0.7115\\
$\mu_p$ & 0.0038   & 0.0039 &    0.0026   &     0.0036 &0.0023& -0.2657& -0.2657 \\
$\phi_1$ &  3.11e-6 & 1.32e-5 & 5.55e-7  &   8.15e-7 &1.04e-6 & 0.0000&0.0000\\
$\phi_2$ & 0.6341  &  0.2381   &  0.6711 &    0.4394&0.8344 & 0.0000& 0.0000\\
$\beta_1$ &  1.32e-5 & 4.77e-5  & 2.94e-6  & 2.80e-6&2.32e-5 & 0.9964 & 0.7250\\
$\beta_2$ & 1.18e-6  & 1.20e-6  & 8.23e-7 & 8.86e-7&9.05e-7 & 0.8618 & 0.0025\\
$\delta$ &0.3270  &  0.0633  &  0.3335   &  0.4947&0.3880 & -0.9737&-0.9757\\
$\psi$ &  0.0996  &  0.1703  &  0.0323  & 0.0002& 0.0032  & -0.0004& -0.0004\\
$\omega$ & 0.4764  &  0.2060   &  0.4342  & 0.5396&0.7022  & 0.0173 & 0.0577\\
$\sigma$ & 0.0015  &  0.0005  &  0.0015   & 0.0012&0.0017 & -0.0591 & -0.0591\\
$\gamma_S$ & 0.0009  & 0.0007  & 0.0007  & 0.0701&0.0009 & -0.0021 & -0.0010\\
$\gamma_A$ & 0.0726  &  0.0112   &  0.0716   &   0.0001&0.0744 & -0.0074& -0.0074\\
$\eta_S$ &  0.0065  &  0.0099  &   0.0030    & 0.0064&0.0007 & 0.0023 & 0.0014\\
$\eta_A$ & 0.4177   & 0.2867   & 0.3542 &  0.3585&0.0524 & 0.9941 &0.9950\\
\hline
Order derivatives & Mean & STD & Median & \multicolumn{2}{c}{Optimized value}& \multicolumn{2}{c}{Sensitivity}\\
\hline
$\alpha_S$ &  1.0000 & 1.1e-16 & 1.0000 & \multicolumn{2}{c}{1.0000}&\multicolumn{2}{c}{0.7504} \\
$\alpha_E$ & 0.9702 & 0.0157 & 0.9671 & \multicolumn{2}{c}{0.9591}&\multicolumn{2}{c}{-10.943} \\
$\alpha_{I_A}$ & 0.9853 & 0.0530 & 1.0000 & \multicolumn{2}{c}{1.0000}&\multicolumn{2}{c}{4.4178}\\
$\alpha_{I_S}$ &  0.8641 & 0.0783 & 0.8604 & \multicolumn{2}{c}{0.7868}&\multicolumn{2}{c}{0.0035}\\
$\alpha_{R}$ & 0.7847 & 0.0776 & 0.7665 & \multicolumn{2}{c}{0.7000}&\multicolumn{2}{c}{0.0000}\\
$\alpha_D$ & 0.7731 & 0.1171 & 0.7000 & \multicolumn{2}{c}{0.7018}&\multicolumn{2}{c}{0.0000}\\ 
$\alpha_W$ & 0.9990 & 0.0090 & 1.0000 & \multicolumn{2}{c}{1.0000}&\multicolumn{2}{c}{4.6752}\\
\hline
Initial conditions &  Mean & STD & Median & \multicolumn{4}{c}{Optimized value} \\
 &      &     &        &     \multicolumn{2}{c}{ODE}      &   \multicolumn{2}{c}{FDE}  \\ 
\hline
$S(0)$ &  70088.0 & 16318.8 & 65858.9 & \multicolumn{2}{c}{57674.0} & \multicolumn{2}{c}{83379.9}\\
$E(0)$ & 0.81708 & 0.89169 & 0.57169 & \multicolumn{2}{c}{0.1181} & \multicolumn{2}{c}{0.1002}\\
$I_A(0)$ & 0.38595 & 0.47730 & 0.20705 & \multicolumn{2}{c}{0.1093} & \multicolumn{2}{c}{0.0404}\\
$W(0)$ &  0.64895 & 1.25392 & 0.30470 & \multicolumn{2}{c}{0.0015} & \multicolumn{2}{c}{1.0605}\\
\hline
\end{tabular}
\label{tab:Sensitivity}
\end{table}

\subsection{Sensitivity analysis of the basic reproduction number}
This section examines the influence of different factors on the transmission of epidemic diseases. Policymakers can make informed decisions regarding public health interventions by assessing the sensitivity of $R_0$ to various parameters, such as social distancing measures. Furthermore, this methodology allows for identifying parameters that have the most significant impact on disease propagation. To assess the effects of minor variations in a parameter or a derivative order, denoted both by $\mathcal{P}$, on the basic reproduction number $R_0$, we introduce the forward normalized sensitivity index of $R_0$ for $\mathcal{P}$, which can be mathematically formulated as:
\begin{equation*}
    \mathcal{S}_{\mathcal{P}}^{R_0}=\frac{\partial R_0}{\partial \mathcal{P}}\frac{\mathcal{P}}{R_0}.
\end{equation*}

\begin{figure}[ht!]
    \centering
    \includegraphics[width=.9\textwidth]{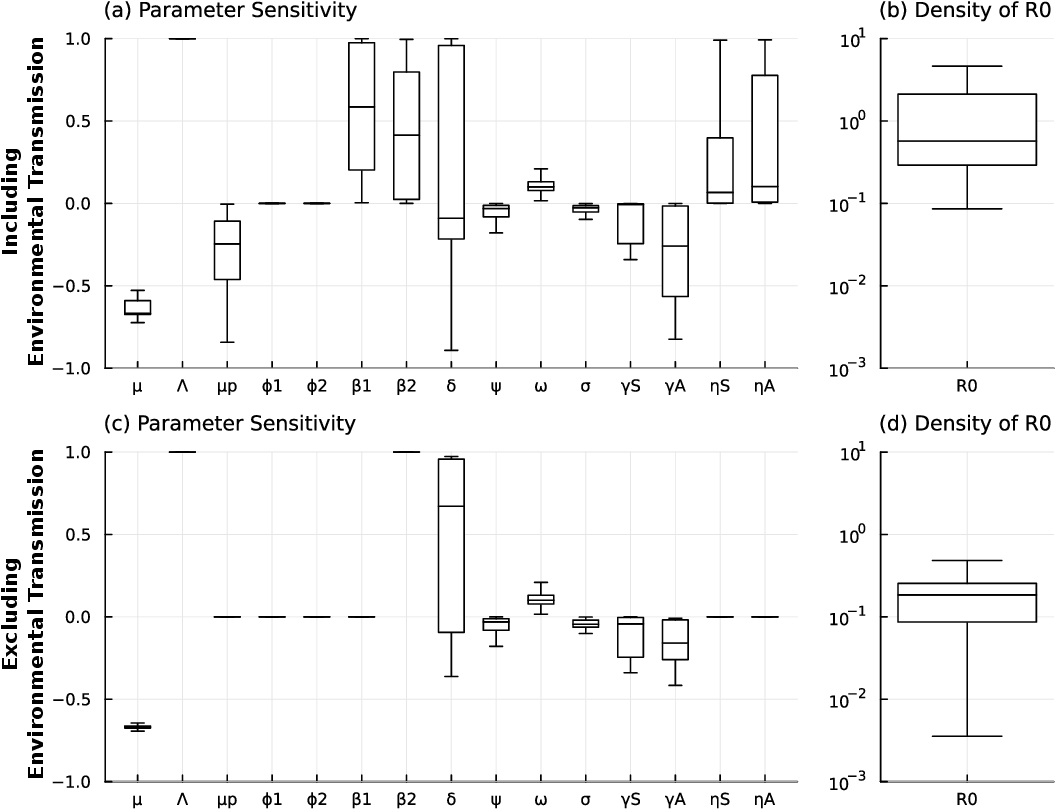}
    \caption{Sensitivity Analysis of $R_0$ for Model~\eqref{model} with Integer Order Derivatives. (a) The boxplots of the sensitivity indices of $R_0$ to the parameter values, and (b) the distribution of $R_0$ for these values. (c) The sensitivity indices of $R_0$ to parameters when the environmental pathogen-related coefficients $\beta_1$, $\eta_S$, and $\eta_A$ are constrained to zero. (d) Illustration of the resulting influence on $R_0$ when the contributions of environmental transmission factors are eliminated.}
    \label{fig:SenR0}
\end{figure}

\begin{figure}[ht!]
    \centering
    \includegraphics[width=.9\textwidth]{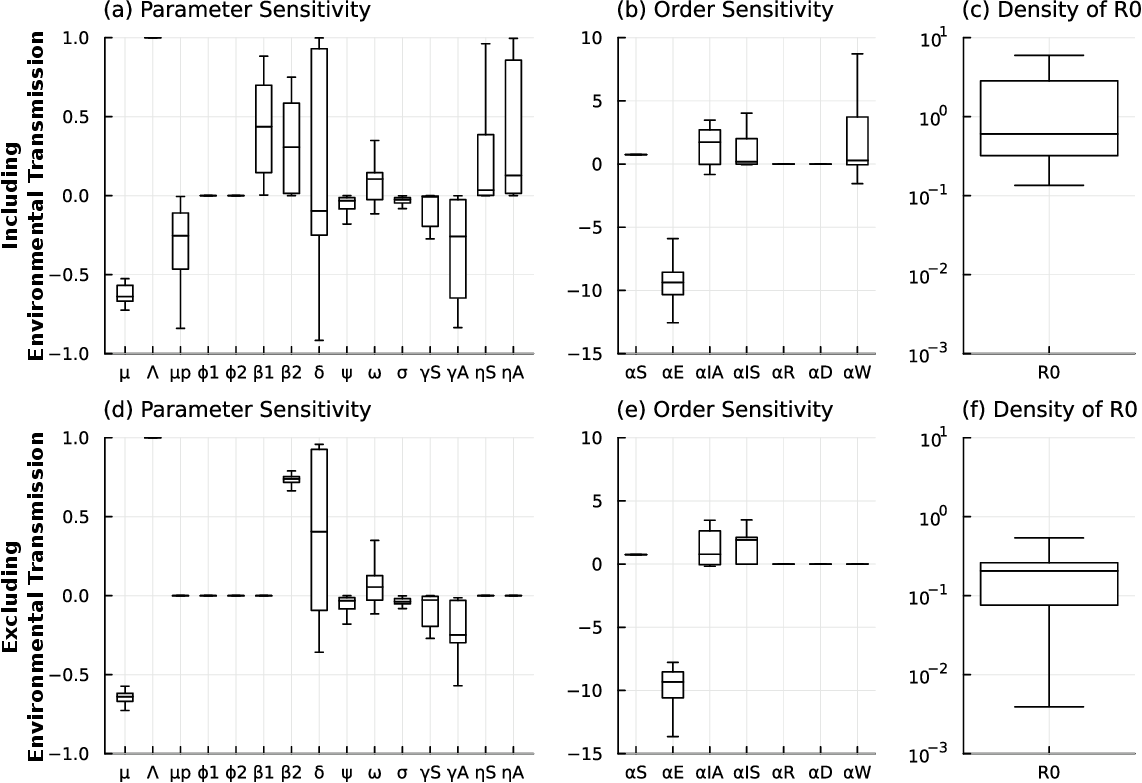}
    \caption{Sensitivity Analysis of $R_0$ for Model~\eqref{model} with Fractional Order Derivatives. (a) The boxplots of the sensitivity indices of $R_0$ to the parameter values and (b) order derivatives, and (c) the distribution of $R_0$ for these values. (d) The sensitivity indices of $R_0$ to parameters and (e) order derivatives when the environmental pathogen-related coefficients $\beta_1$, $\eta_S$, and $\eta_A$ are constrained to zero. (f) Illustration of the resulting influence on $R_0$ when the contributions of environmental transmission factors are eliminated.}
    \label{fig:SenR0Frac}
\end{figure}

If the sensitivity index of $\mathcal{S}_{\mathcal{P}}^{R_0}$ is positive, it indicates an increase in the value of the basic reproduction number $R_0$ concerning $\mathcal{P}$. Conversely, the sensitivity index will be negative if the value of $R_0$ decreases in response to $\mathcal{P}$ changes. 

The computed sensitivity indices for $R_0$ to the optimized values have been demonstrated in Table \ref{tab:Sensitivity}. In addition, we illustrate the distribution of sensitivity indices of the parameters on $R_0$ through boxplots, together with the density of $R_0$ for these values in Figure~\ref{fig:SenR0}(a-b), for the integer order model. Our analysis revealed that only the parameter $\delta$, representing the proportion of symptomatic infectious individuals, mainly exhibited positive but sometimes negative sensitivity. In contrast, the other parameters maintained their sign of sensitivity indices. 

Moreover, exploring the sensitivity of \(R_0\) to parameters that facilitate virus transmission---specifically \(\beta_1\), \(\eta_S\), and \(\eta_A\)---presents an interesting area of discussion. Notice that \(\mu_P\), representing the natural death rate of the pathogen in the environment, is excluded from this analysis due to its consistently negative sensitivity index and it is beyond human control. Similarly, \(\phi_1\) is omitted as it does not feature in the equation \eqref{R0} and thus has no impact on the sensitivity. In contrast, the sensitivity indices of \(\beta_1\), \(\eta_S\), and \(\eta_A\) are positive, raising interesting considerations regarding the potential effects on \(R_0\) through rigorous preventative measures, such as enhanced caution, hand washing, mask usage, and efforts to eliminate environmental contamination. Under these assumptions, the significant impact on \(R_0\) is illustrated in Figure~\ref{fig:SenR0}(c-d), demonstrating that \(R_0\) could be reduced to below 1 with optimal interventions.

A particularly significant and intriguing aspect is the comparison of the impacts of fractional orders in sensitivity analysis. Figure~\ref{fig:SenR0Frac}(a) illustrates that in the fractional model, all parameters exhibit somewhat similar behaviors in terms of their sensitivity indices on $R_0$ when compared to the integer model. However, there are instances where $\omega$ may negatively influence $R_0$. Figure~\ref{fig:SenR0Frac}(b) unveils the sensitivity of the order of derivatives on $R_0$, which is noteworthy due to the challenging nature of interpreting the effects of orders on $R_0$. Interestingly, it reveals that the sensitivity index of $\alpha_E$ is negative, whereas the indices for other orders are predominantly positive. This provides a preliminary insight into the impact of derivative orders on $R_0$. It also results that $R_0$ values in fractional-order models are marginally higher than those in integer-order models, comparing Figure~\ref{fig:SenR0Frac}(c) with Figure~\ref{fig:SenR0}(c).
Furthermore, echoing the observations made with the integer model, Figure~\ref{fig:SenR0Frac}(d-f) also demonstrates how mitigating environmental transmission factors influences the outcomes in the fractional order model.

\subsection{Numerical methods and implementation}
The numerical analyses in this study were carried out using the Julia programming language. The FdeSolver.jl package (v 1.0.7) solves fractional differential equations, implementing predictor-corrector algorithms and product-integration rules~\cite{FdeSolver}. Parameter estimation was performed using Bayesian inference and Hamiltonian Monte Carlo (HMC) with the Turing.jl package, and ODEs were solved using the DifferentialEquations.jl package. The order of derivatives was optimized using the function (L)BFGS, which employs the (Limited-memory) Broyden–Fletcher–Goldfarb–Shanno algorithm from the Optim.jl package in combination with FdeSolver.jl.

In our efforts to derive parameters and orders, our goal is not merely to identify the optimal solutions, parameters, and orders, but rather to establish a reasonable range of values. This approach allows us to assess the impact of pathogen shedding in our modeling and to explore the influence of fractional derivatives within this context. However, this process necessitates certain considerations. For instance, we account for a coefficient representing the proportion of asymptomatic cases (\(I_A\)) that undergo COVID-19 testing (denoted by \(T\)), aiming for a more realistic evaluation of confirmed case data. This is based on the premise that data is predominantly derived from those who have been tested for COVID-19. Assuming that all symptomatic cases (\(I_S\)) and only a portion of \(I_A\) are tested, the confirmed cases are represented by \(I_S + T I_A\). This adjustment introduces an additional equation to more accurately fit the recovered (\(R\)) compartment, acknowledging that records of recovery only include tested individuals from both \(I_A\) and \(I_S\) groups. It is important to note that while this modification does not alter the overall model or our analysis---since it involves independent equations---it enhances the realism of the fit.

Ultimately, one must recognize that the flexibility afforded by the fractional model does not inherently equate to a superior or more accurate fit, as it could potentially lead to overfitting. Therefore, further analysis, such as cross-validation, is essential, especially in scenarios where parameter values have significant implications for policy-making.

\subsection{Data and code availability}
All computational results for this article are available on GitHub and accessible via \url{https://github.com/moeinkh88/Covid_Shedding}, including all data and code used for running the simulations and generating the figures.
% the permanent Zenodo DOI: XXXX.

\section{Conclusion}
\label{sec:conc}

In concluding the paper on the fractional model of COVID-19 transmission with environmental pathogens as a shedding effect, it is pertinent to encapsulate the significant contributions, findings, and potential implications for future research and policy-making. This paper has meticulously developed a fractional-order compartmental model, enhancing the conventional framework by incorporating environmental transmission vectors and accounting for asymptomatic and symptomatic infections. Through rigorous mathematical analysis, including existence, uniqueness, boundedness of solutions, and stability analyses, the model's robust theoretical foundation has been solidly established.

Numerical results have significantly demonstrated the model's utility in simulating the dynamics of COVID-19 transmission, highlighting a more flexible fit with actual data when fractional derivatives are employed, thus offering broader insight into the disease's spread. The sensitivity analysis of the basic reproduction number $R_0$ has revealed critical parameters and order of derivatives influencing the transmission dynamics, guiding targeted interventions. The model has also highlighted the considerable impact of environmental transmission factors, suggesting that mitigating these can substantially lower $R_0$, thereby controlling the outbreak's spread.

This study emphasizes the importance of incorporating environmental factors and fractional calculus in modeling infectious diseases, offering a more comprehensive and flexible framework. Recognizing the combined and cross-disciplinary endeavor, this research sets a precedent for subsequent inquiries, encouraging the academic and scientific arenas to persist in evolving and refining mathematical models to confront infectious diseases more effectively in the future.
% -----------------------------------

\section*{Declaration of Competing Interest}
The authors declare no conflict of interest.

\section*{Acknowledgment}
This study has been supported by the Academy of Finland (330887 to MK, LL), the European Union’s Horizon 2020 research and innovation program (952914 to LL), and the UTUGS graduate school of the University of Turku (to MK).

The authors wish to acknowledge CSC–IT Center for Science, Finland, for computational resources.

% --------------------------------------
\bibliographystyle{unsrt}
\bibliography{reference}
% --------------------------------------

\end{document}